\newcommand{\sign}{\text{sign}}
\newfont{\popis}{cmcsc10}
 \newtheorem{thm}{Theorem}
 \newtheorem{lem}{Lemma}
\theoremstyle{remark}
\theoremstyle{definition}
 \newcommand{\carka}{\raise0.2em\hbox{,}}
\newcommand{\begeqO}{\begin{eqnarray*}}
\newcommand{\eneqO}{\end{eqnarray*}}
\newcommand{\begeq}{\begin{eqnarray}}
\newcommand{\eneq}{\end{eqnarray}}
\newcommand{\nin}{\noindent}
\newcommand{\vph}{\varphi}
\newcommand{\lam}{\lambda}
\newcommand{\toprob}{\mathop{\to}\limits^{P}}
\newcommand{\tocalD}{\mathop{\longrightarrow}\limits^{\mathcal{D}}}
\newcommand{\bxi}{\boldsymbol \xi}
\newcommand{\bD}{\boldsymbol D}
\newcommand{\Sum}{\mathop{\sum}\limits}
\newcommand{\Int}{\mathop{\int}\limits}
\newcommand{\Max}{\mathop{\max}\limits}
\newcommand{\what}{\widehat}
\newcommand{\changefont}[3]{
\fontfamily{#1} \fontseries{#2} \fontshape{#3} \selectfont}
\begin{document}
\renewcommand{\thefootnote}{\fnsymbol{footnote}}

\begin{savenotes}
\parbox[t]{400pt}{
\changefont{phv}{b}{n}
\nin {
\LARGE Testing the Equality of Covariance Operators in Functional Samples\footnote{Research partially
supported by NSF grants DMS 0905400 at the
University of Utah, DMS-0804165 and DMS-0931948 at Utah State University and DFG grant STE 306/22-1 at the University of Cologne.}
}
\renewcommand{\thefootnote}{\arabic{footnote}}
}
\end{savenotes}
\bigskip\\
STEFAN FREMDT\\
{\small\textit{ Mathematical Institute, University of Cologne}}\bigskip\\
LAJOS HORV\'{A}TH\\
{\small \textit{Department of Mathematics, University of Utah}}\bigskip\\
PIOTR KOKOSZKA\\
{\small \textit{Department of Mathematics and Statistics,
Utah State University}}\bigskip\\
JOSEF G. STEINEBACH\\
{\small\textit{ Mathematical Institute, University of Cologne}}\\
\parbox[t]{420pt}{
\paragraph{ABSTRACT.}{
\bf\small
We propose a robust test for the equality of the covariance
structures in two functional samples. The test statistic has a chi-square
asymptotic distribution with a known number of degrees of freedom,
which depends on the level of dimension reduction needed to represent
the data. Detailed analysis of the asymptotic properties is
developed. Finite sample performance is examined by a simulation
study and an application to egg--laying curves of fruit flies.}
\bigskip\\
\noindent {\em Keywords:} Asymptotic distribution, Covariance operator, Functional data, Quadratic forms,
Two sample problem.

\vspace{2mm}

\noindent {\em Abbreviated Title:} Equality of covariance operators

\vspace{2mm}

\noindent {\em AMS subject classification:} 
Primary 62G10; secondary 62G20, 62H15
}

\allowdisplaybreaks

\bigskip

\section{Introduction}\label{sec0}

The last decade has seen increasing interest in methods of functional
data analysis which offer novel and effective tools for dealing with
problems where curves can naturally be viewed as data objects. The
books by \citet{ramsay:silverman:2005} and
\citet{ramsay:hooker:graves:2009} offer comprehensive introductions to
the subject, the collection \citet{ferraty:romain:2011} reviews some
recent developments focusing on advances in the relevant theory, while
the monographs of \citet{bosq:2000}, \citet{ferraty:vieu:2006} and
\citet{HKbook} develop the field in several important directions.
Despite the emergence of many alternative ways of looking at
functional data, and many dimension reduction approaches, the
functional principal components (FPC's) still remain the most
important starting point for many functional data analysis procedures,
\citet{reiss:ogden:2007}, \citet{gervini:2008}, \citet{yao:muller:2010},
\citet{gabrys:horvath:kokoszka:2010} are just a handful of illustrative
references.  The FPC's are the eigenfunctions of the covariance
operator. This paper focuses on testing if the covariance operators of
two functional samples are equal. By the Karhunen-Lo\`eve expansion,
this is equivalent to testing if both samples have the same set of FPC's. \citet{benko:hardle:kneip:2009} developed bootstrap procedures
for testing the equality of specific FPC's.  \citet{panaretos:2010}
proposed a test of the type we consider, but assuming that the curves
have a Gaussian distribution. The main
result of \citet{panaretos:2010} follows as a corollary of our more
general approach (Theorem~\ref{thm 3.1}).

Despite their importance, two sample problems for functional data
received relatively little attention.  In addition to the work of
\citet{benko:hardle:kneip:2009} and \citet{panaretos:2010}, the relevant
references are \citet{horvath:kokoszka:reimherr:2009} and
\citet{horvath:kokoszka:reeder:2011} who focus, respectively, on the
regression kernels in functional linear models and the mean of
functional data exhibiting temporal dependence.  Clearly, if some
population parameters of two functional samples are different,
estimating them using the pooled sample may lead to spurious
conclusions. Due to the importance of the FPC's, a relatively simple
and robust procedure for testing the equality of the covariance
operators is called for.

The remainder of this paper is organized as follows. Section
\ref{s:pre} sets out the notation and definitions. The construction
of the test statistic and its asymptotic properties are
developed in Section~\ref{sec1}.  Section~\ref{s:sim} reports
the results of a simulation  study and illustrates the procedure by
application to egg-laying curves of Mediterranean fruit flies.
The proofs of the asymptotic results of  Section~\ref{sec1}
are given in Section~\ref{sec2}.

\section{Preliminaries} \label{s:pre}
Let $X_1, X_2, \ldots, X_N$ be independent, identically distributed
random variables in $L_2 [0,\hspace{-1.5pt}1]$ with $EX_i(t)=\mu(t)$ and $cov (X_i
(t), X_i(s))= C(t,s)$. We assume that another sample $X^\ast_1,
X^\ast_2,\ldots X^\ast_M$ is also available and let $\mu^\ast (t)=E
X^\ast_i (t)$ and $C^\ast(t,s)=cov(X^\ast_i (t), X^\ast_i(s))$ for
$t,s \in [0,1].$ We wish to test the null hypothesis
\[
H_0: ~C=C^\ast
\]
against the alternative $H_A$ that $H_0$ does not hold.

A crucial assumption considering the asymptotics of our test procedure will be that
\begin{align}\label{eq 1.4}
\Theta_{N,M} =\frac {N}{M+N} \to \Theta \in (0,1)\quad\text{as}\quad N,M\to \infty.
\end{align}
For the construction of our test procedure we will use an estimate of the asymptotic pooled covariance operator $\mathfrak{R}$ of the two given samples (cf. \eqref{eq 1.6}) which is defined by the kernel
\[
 R(t,s) = \Theta C(t,s) + (1-\Theta)C^\ast(t,s).
\]
Denote by
$(\lambda_1, \varphi_1), (\lambda_2,\varphi_2),\ldots$
the eigenvalue/eigenfunction pairs of $\mathfrak{R}$, which are defined by
\begeq\label{eq 1.1}
\lambda_k \varphi_k (t)=\mathfrak{R}\varphi_k(t)
=\int^1_0 R(t,s)\varphi_k (s) ds,\quad t\in[0,1],\quad 1\le k<\infty~.
\eneq
Throughout this paper we assume
\begeq\label{eq 1.2}
\lambda_1>\lambda_2>\ldots >\lambda_p>\lambda_{p+1},
\eneq
i.e. there exist at least $p$ distinct (positive) eigenvalues.
Under assumption (\ref{eq 1.2}), we can uniquely (up to signs)
choose $\varphi_1,\ldots,\varphi_p$
satisfying (\ref{eq 1.1}), if we require $\|\varphi_i\|=1$, where for a positive integer $d$ and for $x\in L_2\left([0,1]^d\right)$
\[
\|x \|=\left( \int^1_0\cdots\int^1_0 x^2 (t_1,\ldots,t_d) dt_1\cdots dt_d \right)^{1/2}.
\]

Thus, under (\ref{eq 1.2}),  $\{\varphi_i, 1\le i \leq p\}$ 
is an orthonormal system that can be extended to an orthonormal basis $\{\varphi_i, 1\le i <\infty\}$.

If $H_0$ holds, then $(\lambda_i, \varphi_i), ~1\le i<\infty,$ are also
the eigenvalues/eigenfunctions of the covariance operators
$\mathfrak{C}$ of the first and $\mathfrak{C}^\ast$ of the
second sample.  To construct a test statistic which converges
under $H_0$, we can therefore pool the two samples, as explained
in Section~\ref{sec1}.

\section{The test and the
asymptotic results}
\label{sec1}
Our procedure is
based on  projecting the observations onto a suitably chosen
finite-dimensional space. To define this space,
introduce the empirical pooled covariance operator
$\what{\mathfrak{R}}_{N,M}$ defined by  the kernel
\begin{align}\label{eq 1.6}
\what{R}_{N,M}(t,s)=\frac 1{N+M} \Biggl\{&\Sum^N_{k=1}(X_k(t)-
\overline X_N (t))(X_k(s)-\overline X_N (s))
\\
&+\Sum^M_{k=1}(X^\ast_k(t)-\overline X_M^* (t))(X^\ast_k(s)-
\overline X_M^* (s))\Biggr\},\notag
\end{align}
where
\[\overline X_N(t)=\frac 1N\Sum^N_{k=1} X_k (t)\quad\hbox{and}\quad\overline X^\ast_M(t)
=\frac 1M \sum^M_{k=1} X^\ast_k (t)
\]
are the sample mean functions. Let $(\what\lambda_i, \what\varphi_i)$
denote the eigenvalues/eigenfunctions of $\what{\mathfrak{R}}_{N,M}$, i.e.
\[
\what\lam_i\what\vph_i (t)=\what{\mathfrak{R}}_{N,M}\what\vph_i(t)
=\Int^1_0 \what{R}_{N,M}(t,s)\what\vph_i(s)ds,\quad t\in[0,1],\quad 1\le i\le N+M,
\]
with
$
\what\lam_1 \ge \what\lam_2\ge\ldots~.
$
We can and will assume that the $\what\vph_i$ form an orthonormal system.
We consider the  projections
\begeq\label{eq 3.2}
\what a_k(i)= \langle X_k-\overline X_N,\what\vph_i\rangle
=\Int^1_0(X_k(t)-\overline X_N(t))\what\vph_i (t) dt
\eneq
and
\begeq\label{eq 3.3}
\what a^\ast_k(j)= \langle X^\ast_k-\overline X^\ast_M,\what\vph_j\rangle
=\Int^1_0\left(X^\ast_k(t)-\overline X^\ast_M(t)\right)\what\vph_j (t) dt.
\eneq
To test $ H_0$, we compare the matrices $\what\Delta_N$
and $\what\Delta^\ast_M$ with entries
\[
\what\Delta_N(i,j)
=\frac 1N\Sum^N_{k=1}\what a_k(i)\what a_k(j),\quad 1\le i,j\le p,
\]
and
\[
\what\Delta^\ast_M(i,j)=\frac 1M \Sum^M_{k=1}\what a^\ast_k(i)
\what a^\ast_k(j),\quad 1\le i,j\le p.
\]
We note that $\what\Delta_N(i,j)-\what\Delta^\ast_M(i,j)$ is the projection of
$\what{C}_N(t,s) -\what{C}^\ast_M(t,s)$ in the direction of
$\what\vph_i(t)\what\vph_j(s)$, where
\[
\what C_N(t,s)=\frac 1N \Sum^N_{k=1}\left(X_k(t)-\overline X_N(t)\right)
\left(X_k(s)-\overline X_N(s)\right)
\]
and
\[\what C^\ast_M(t,s)=\frac 1M \Sum^M_{k=1}
\left(X^\ast_k(t)-\overline X^\ast_M(t)\right)
\left(X^\ast_k(s)-\overline X^\ast_M(s)\right)
\]
are the empirical covariances of the two samples.

From the columns below the diagonal of $\what\Delta_N-\what\Delta^\ast_M$
we create a vector $\what{\bxi}_{N,M}$ as follows:
\begeq
\what{\bxi}_{N,M}= \mbox{vech} \left(\what\Delta_N-\what\Delta^\ast_M \right) = \left(
\begin{array}{c}
\what\Delta_N(1,1)-\what\Delta^\ast_M(1,1)
\\
\what\Delta_N(2,1)-\what\Delta^\ast_M(2,1)
\\
\vdots
\\
\what\Delta_N(p,p)-\what\Delta^\ast_M(p,p)
\end{array}
\right).
\eneq
For the properties of the vech operator we refer to \citet{abadir:magnus:2005}.\\
Next we estimate the asymptotic covariance matrix of
$(MN/(N+M))^{1/2}\what{\bxi}_{N,M}.$ Let
\begin{align*}
\what L_{N,M} (k,k')= &\left(1-\Theta_{N,M}\right)\left\{ \frac 1N \Sum^N_{\ell=1}\what a_\ell(i)
\what a_\ell(j)\what a_\ell(i')\what a_\ell(j')
-\big\langle \what{\mathfrak C}_N \what\vph_i,\what\vph_j\big\rangle
\big\langle\what{\mathfrak C}_N\what\vph_{i'},\what\vph_{j'}\big\rangle\right\}
\\
&+ \Theta_{N,M}\left\{ \frac 1M \Sum^M_{\ell=1}
\what a^\ast_\ell(i)\what a^\ast_\ell(j)\what a^\ast_\ell(i')
\what a^\ast_\ell(j')
-\big\langle \what{\mathfrak C}^\ast_M \what\vph_i,\what\vph_j\big\rangle
\big\langle\what{\mathfrak C}^\ast_M\what\vph_{i'},\what\vph_{j'}\big\rangle\right\}
\end{align*}
where $i,j,i',j'$ depend on $k,k'$ (see below),
and $\what{\mathfrak C}_N$ $(\what{\mathfrak C}^\ast_M)$ is interpreted as
an operator with $\what{\mathfrak C}_N$ defined as
\[
 \what{\mathfrak C}_N \what\vph_i =\int^1_0 \what C_N(t,s)\what\vph_i (s) ds.
\]
(An analogous definition holds for $\what{\mathfrak C}^\ast_M$.)
From this definition it follows that
\[
\big\langle \what{\mathfrak C}_N \what\vph_i,\what\vph_j\big\rangle
=\frac 1N \Sum^N_{l=1} \what a_\ell (i) \what a_\ell (j).
\]
We note that one can use $\what L^\ast_{N,M}(k,k')$ instead of
 $\what L_{N,M}(k, k'),$ where $\what L^\ast_{N,M}(k,k')$ is defined like
 $\what L_{N,M}(k,k'),$ but $\big\langle \what{\mathfrak C}_N \what\vph_i,\what\vph_j\big\rangle$ and
 $\big\langle \what{\mathfrak C}^\ast_M \what\vph_i,\what\vph_j\big\rangle$
 are replaced with $0$ if $i\neq j$ and $\what\lam_i$ if $i=j.$
 In the same spirit, $\big\langle \what{\mathfrak C}_N \what\vph_{i'},\what\vph_{j'}\big\rangle$ and
 $\big\langle \what{\mathfrak C}^\ast_M \what\vph_{i'},\what\vph_{j'}\big\rangle$ are replaced
 with $0$ for ${i'}\neq {j'}$ and $\what\lam_{i'}$ if ${i'}={j'}.$

The index $(i,j)$ is computed from $k$ in the following way: Let
\begin{align} \label{index1}
k'=\frac{p(p+1)}2 -k+1, \quad i'=p-i+1, \quad\mbox{and} \quad j'= p-j+1.
\end{align}
We look at an upper triangle matrix $(a_{i',j'})$. Then, for column $j',$
we have that $(j'-1)j'/2 < k\le j'(j'+1)/2.$
Thus $j'=\left\lceil\sqrt{2k'+\frac 14}-\frac 12\right\rceil$ and
$i'=k'-(j'-1)j'/2,$ where $\lceil r\rceil=\min\{k\in\mathbb Z:\,k\ge r\}$
for $r\in\mathbb{R}.$
Consequently, the index $(i,j)$ can be computed from $k$ via
\begin{align} \label{index2}
j=p-\left\lceil\sqrt{p(p+1)-2k+\frac 94}-\frac 12\right\rceil+1
\quad\mbox{and}\quad i=k+p-p\cdot j+\frac{j(j-1)}2.
\end{align}

With the above notation, we can formulate the main result of this paper:

\begin{thm}\label{thm 1.1}
We assume that $H_0$, (\ref{eq 1.4}) and (\ref{eq 1.2}) hold, and
\begin{align}\label{eq 1.3}
\int^1_0 E (X_1(t))^4dt<\infty, \quad \int^1_0 E(X^\ast_1(t))^4 dt <\infty.
\end{align}
Then
\[
\frac{NM}{N+M} \what{\bxi}^T_{N,M} \what L^{-1}_{N,M} \what{\bxi}_{N,M}~\tocalD ~\chi^2_{p(p+1)/2},\quad\text{as}\quad N,M\to \infty
,
\]
where $\chi^2_{p(p+1)/2}$ stands for a $\chi^2$ random variable with $p(p+1)/2$
degrees of freedom.
\end{thm}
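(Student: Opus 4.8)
The plan is to establish the result in three stages: a finite-dimensional central limit theorem for the version of $\what{\bxi}_{N,M}$ built from the \emph{true} eigenfunctions $\vph_i$, a consistency statement $\what L_{N,M}\toprob L$ for a non-degenerate limiting matrix $L$, and a final appeal to Slutsky's theorem together with the continuous mapping theorem. Throughout write $n=NM/(N+M)$, and note that $n\asymp N\asymp M$ under \eqref{eq 1.4}.

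First I would record the perturbation estimates. Since $\what{\mathfrak R}_{N,M}$ is an average over $N+M$ summands, the fourth--moment condition \eqref{eq 1.3} gives $\|\what{\mathfrak R}_{N,M}-\mathfrak R\|=O_P(n^{-1/2})$ in Hilbert--Schmidt norm. The eigengap condition \eqref{eq 1.2} then yields, via standard perturbation theory for compact self--adjoint operators (see e.g.\ \citet{bosq:2000}), that $|\what\lam_i-\lam_i|=O_P(n^{-1/2})$ and $\|\what\vph_i-\hat s_i\vph_i\|=O_P(n^{-1/2})$ for suitable signs $\hat s_i\in\{-1,+1\}$, $1\le i\le p$. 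Because the quadratic form $\what{\bxi}^T_{N,M}\what L^{-1}_{N,M}\what{\bxi}_{N,M}$ is invariant under a simultaneous sign flip of any $\what\vph_i$ (the signs cancel between $\what{\bxi}_{N,M}$ and $\what L_{N,M}$), I may assume $\hat s_i=1$.

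The crux — and the step I expect to be the main obstacle — is the reduction to the true eigenfunctions. Each coordinate of $\what{\bxi}_{N,M}$ equals $\langle(\what{\mathfrak C}_N-\what{\mathfrak C}^\ast_M)\what\vph_i,\what\vph_j\rangle$. Under $H_0$ one has $\mathfrak C=\mathfrak C^\ast=\mathfrak R$, so $D:=\what{\mathfrak C}_N-\what{\mathfrak C}^\ast_M=O_P(n^{-1/2})$, and expanding $\langle D\what\vph_i,\what\vph_j\rangle=\langle D\vph_i,\vph_j\rangle+\langle D(\what\vph_i-\vph_i),\vph_j\rangle+\langle D\vph_i,\what\vph_j-\vph_j\rangle+\langle D(\what\vph_i-\vph_i),\what\vph_j-\vph_j\rangle$ shows that the three correction terms are each $O_P(n^{-1/2})\cdot O_P(n^{-1/2})=O_P(n^{-1})$, hence negligible after multiplication by $n^{1/2}$. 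This is exactly why the \emph{difference} structure of the statistic, rather than a single-sample covariance, is essential: $D$ is itself small under $H_0$, whereas $\what{\mathfrak C}_N$ alone is $O_P(1)$. Replacing the sample means $\overline X_N,\overline X^\ast_M$ by $\mu,\mu^\ast$ likewise contributes only $o_P(n^{-1/2})$. Thus $n^{1/2}\what{\bxi}_{N,M}$ has the same limit as $n^{1/2}\bxi_{N,M}$, where $\bxi_{N,M}$ stacks $\frac1N\Sum_{k=1}^N[a_k(i)a_k(j)-\lam_i\delta_{ij}]-\frac1M\Sum_{k=1}^M[a^\ast_k(i)a^\ast_k(j)-\lam_i\delta_{ij}]$ with $a_k(i)=\langle X_k-\mu,\vph_i\rangle$ and $\langle\mathfrak C\vph_i,\vph_j\rangle=\lam_i\delta_{ij}$.

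Now $\bxi_{N,M}$ is a difference of two independent, centered i.i.d.\ averages. Using $n/N\to1-\Theta$ and $n/M\to\Theta$, the classical multivariate CLT gives $n^{1/2}\bxi_{N,M}\tocalD N(\bO,L)$ with $L=(1-\Theta)\bSigma+\Theta\bSigma^\ast$, where $\bSigma$ (resp.\ $\bSigma^\ast$) has entries $\mathrm{cov}(a_1(i)a_1(j),a_1(i')a_1(j'))$, finite by \eqref{eq 1.3}. Matching the indices $(i,j)\leftrightarrow k$ and $(i',j')\leftrightarrow k'$ through \eqref{index2} identifies $\what L_{N,M}$ as the empirical plug-in of $L$, so the law of large numbers for the fourth-order sample moments together with the eigenelement consistency above gives $\what L_{N,M}\toprob L$. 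Provided $L$ is non-singular — which I would either impose or verify from \eqref{eq 1.2} in the Gaussian specialization that recovers Panaretos' result — Slutsky's theorem and the continuous mapping theorem yield $n\,\what{\bxi}^T_{N,M}\what L^{-1}_{N,M}\what{\bxi}_{N,M}\tocalD Z^T L^{-1}Z=\chi^2_{p(p+1)/2}$ for $Z\sim N(\bO,L)$, since $L^{-1/2}Z$ is standard normal in dimension $p(p+1)/2$.
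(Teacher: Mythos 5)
Your proposal is correct and takes essentially the same route as the paper's proof: the same eigenelement perturbation bounds (Lemma~\ref{lem 2.3}), negligibility of mean estimation (Lemma~\ref{lem 2.2}), replacement of $\what\vph_i$ by $\what c_i\vph_i$ exploiting the cancellation $C=C^\ast$ under $H_0$ (Lemma~\ref{lem 2.4}), a multivariate CLT for the projected differences (Lemma~\ref{lem 2.5}), consistency of $\what L_{N,M}$ (Lemma~\ref{lem 2.6} with the law of large numbers), sign invariance of the quadratic form, and Slutsky --- your bookkeeping via the difference operator $D=\what{\mathfrak C}_N-\what{\mathfrak C}^\ast_M$, which is $O_P(n^{-1/2})$ under $H_0$, is only a mild (and if anything cleaner) repackaging of the paper's per-sample centering at $C$ and $C^\ast$. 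Your explicit hedge on the nonsingularity of $L$ is no weaker than the paper itself, which never verifies invertibility in the proof of Theorem~\ref{thm 1.1} and only asserts positive definiteness of $L$ (in the proof of Theorem~\ref{thm 3.3}) ``by some lengthy calculations.''
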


Theorem~\ref{thm 1.1} implies that the null hypothesis is rejected
if the test statistic
\[
\what T_1
= \frac{NM}{N+M} \what{\bxi}^T_{N,M} \what L^{-1}_{N,M} \what{\bxi}_{N,M}
\]
exceeds a critical quantile of the chi--square distribution
with $p(p+1)/2$ degrees of freedom.

If both samples are Gaussian random processes, the quadratic form
$\what{\bxi}^T_{N,M} \what L_{N,M}^{-1}\what{\bxi}_{N,M}$
can be replaced with the normalized sum of the squares of
$\what\Delta_{N,M} (i,j) -\what\Delta^\ast_{N,M} (i,j)$, as stated
in the following theorem.

\begin{thm}\label{thm 3.1}
If $X_1, X^\ast_1$ are Gaussian processes and the conditions of
Theorem \ref{thm 1.1} are satisfied, then, as $N,M\to \infty$,
\[
\what T_2 =
\frac{NM}{N+M}\Sum_{1\le i,j\le p}\frac 12
\frac{\left(\what\Delta_{N,M} (i,j)-
\what\Delta^\ast_{N,M}(i,j)\right)^2}
{\what\lam_i \what\lam_j} ~\tocalD~ \chi^2_{p(p+1)/2}.
\]
\end{thm}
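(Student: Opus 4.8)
The plan is to exploit that Theorem~\ref{thm 1.1} already furnishes the asymptotic behaviour of the vech vector, and to show that under the extra Gaussian assumption the estimated weighting matrix $\what L_{N,M}$ (equivalently $\what L_{N,M}^\ast$) collapses to a diagonal matrix whose inverse is exactly the array of weights $1/(2\what\lam_i\what\lam_j)$ appearing in $\what T_2$. Concretely, I would show that $\what T_2=\what T_1+o_P(1)$ and invoke Slutsky's lemma together with Theorem~\ref{thm 1.1}.

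The first and decisive step is to identify the Gaussian limiting covariance. Under $H_0$ and (\ref{eq 1.4}) the pooling kernel satisfies $R=\Theta C+(1-\Theta)C^\ast=C=C^\ast$, so $\mathfrak R=\mathfrak C=\mathfrak C^\ast$ and the $\vph_i$ are eigenfunctions of the common covariance operator with eigenvalues $\lam_i$. Hence the scores $a_1(i)=\langle X_1-\mu,\vph_i\rangle$ (and likewise $a_1^\ast(i)$) are jointly Gaussian with $E[a_1(i)a_1(j)]=\lam_i\delta_{ij}$, i.e. independent centred normals of variances $\lam_i$. Applying Isserlis' formula to the fourth-order terms entering $\what L_{N,M}(k,k')$, each $E[a_1(i)a_1(j)a_1(i')a_1(j')]$ factors as $\lam_i\delta_{ij}\lam_{i'}\delta_{i'j'}+\lam_i\lam_j(\delta_{ii'}\delta_{jj'}+\delta_{ij'}\delta_{ji'})$, and the subtracted product $\langle\mathfrak C\vph_i,\vph_j\rangle\langle\mathfrak C\vph_{i'},\vph_{j'}\rangle$ cancels the first summand. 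Since both samples contribute the same structure, the weights $(1-\Theta)$ and $\Theta$ sum to $1$, so $\what L_{N,M}$ converges to a matrix $L$ that is diagonal in the vech ordering, with $L_{(i,j),(i,j)}=\lam_i\lam_j$ for $i\ne j$ and $L_{(i,i),(i,i)}=2\lam_i^2$; the off-diagonal entries vanish because, for $i\ge j$ and $i'\ge j'$, the surviving term $\delta_{ii'}\delta_{jj'}+\delta_{ij'}\delta_{ji'}$ forces $(i,j)=(i',j')$.

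The second step is the combinatorial bookkeeping matching $\what T_2$ to the diagonal quadratic form. Because $\what\Delta_N-\what\Delta_M^\ast$ is symmetric, the summand of $\what T_2$ for an off-diagonal pair $(i,j)$ with $i>j$ occurs twice, and with the factor $1/2$ it contributes $(\what\Delta_N(i,j)-\what\Delta_M^\ast(i,j))^2/(\what\lam_i\what\lam_j)$, i.e. the $(i,j)$ vech term weighted by $L_{(i,j),(i,j)}^{-1}$; a diagonal pair $(i,i)$ occurs once and contributes $(\cdot)^2/(2\what\lam_i^2)=(\cdot)^2\,L_{(i,i),(i,i)}^{-1}$. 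Thus $\what T_2=\frac{NM}{N+M}\what{\bxi}_{N,M}^{T}\what D_{N,M}^{-1}\what{\bxi}_{N,M}$, where $\what D_{N,M}$ is the diagonal matrix with entries $\what\lam_i\what\lam_j$ and $2\what\lam_i^2$.

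Finally I would close the argument by consistency. Under (\ref{eq 1.2}) one has $\what\lam_i\toprob\lam_i$ for $1\le i\le p$, so $\what D_{N,M}\toprob L$, which is precisely the limit of $\what L_{N,M}$ identified above. Since $\sqrt{NM/(N+M)}\,\what{\bxi}_{N,M}=O_P(1)$ (a byproduct of Theorem~\ref{thm 1.1}, together with $L$ being positive definite), replacing $\what L_{N,M}^{-1}$ by $\what D_{N,M}^{-1}$ changes the quadratic form only by $o_P(1)$, whence $\what T_2=\what T_1+o_P(1)$ and $\what T_2\tocalD\chi^2_{p(p+1)/2}$ by Slutsky's lemma and Theorem~\ref{thm 1.1}. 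The main obstacle is the first step: verifying via Isserlis' formula that the Gaussian fourth-moment matrix is exactly diagonal with the stated entries and that all cross-covariances cancel; once this is in hand, the combinatorial matching and the appeal to Theorem~\ref{thm 1.1} are routine.
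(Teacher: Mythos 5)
Your proposal is correct, but it closes the argument by a genuinely different route than the paper. Both proofs turn on the same core computation: under $H_0$ and Gaussianity the scores $\langle X_1,\vph_i\rangle$ are independent $N(0,\lam_i)$ variables, and the Isserlis/Wick expansion makes the relevant fourth-moment array diagonal in the vech ordering, with entries $\lam_i\lam_j$ for $i\neq j$ and $2\lam_i^2$ for $i=j$. The paper, however, never touches $\what L_{N,M}$ or Theorem~\ref{thm 1.1} at this stage: it feeds that moment computation into Lemma~\ref{lem 2.5} to conclude that the limiting Gaussian matrix $\Delta(i,j)$, $1\le j\le i\le p$, has \emph{independent} entries with exactly those variances, and then obtains the $\chi^2_{p(p+1)/2}$ limit of $\what T_2$ directly, using Lemma~\ref{lem 2.3} (so that $\what\lam_i\toprob\lam_i$ justifies the denominators) together with Lemmas~\ref{lem 2.1}--\ref{lem 2.4}. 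You instead prove the asymptotic equivalence $\what T_2=\what T_1+o_P(1)$ by showing that $\what L_{N,M}$ and your diagonal matrix $\what D_{N,M}$ converge to the same limit $L$, and then quote Theorem~\ref{thm 1.1} with Slutsky. Your route buys economy: Theorem~\ref{thm 1.1} is reused as a black box, and the tightness of $(NM/(N+M))^{1/2}\what\bxi_{N,M}$ comes for free from it. But it pays by requiring the convergence $\what L_{N,M}\toprob L$, which is not purely an Isserlis computation, since $\what L_{N,M}$ is built from \emph{estimated} eigenfunctions; to make this step rigorous you need the paper's Lemma~\ref{lem 2.6} and the law of large numbers (\ref{eq 5.5}), plus the observation that the random signs $\what c_i\what c_j\what c_{i'}\what c_{j'}$ are harmless here because in the Gaussian case the off-diagonal limits are zero (so the signs multiply a quantity tending to $0$) and on the diagonal the signs appear to even powers. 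With that point made explicit, your argument is complete; the paper's direct route simply sidesteps the issue by never passing through the weight matrix at all.
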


Observe that the statistic $\what T_2$ can be written as
\[
\what T_2 =
\frac{NM}{N+M}\left\{\Sum_{1\le i< j\le p}
\frac{\left(\what\Delta_{N,M} (i,j)-
\what\Delta^\ast_{N,M}(i,j)\right)^2}
{\what\lam_i \what\lam_j}
+ \Sum^p_{i=1}
\frac{\left(\what\Delta_{N,M}(i,i)-
\what\Delta^\ast_{N,M}(i,i)\right)^2}
{2 \what\lam_i^2}\right\}.
\]

Next we discuss the asymptotic consistency of the testing procedure based on Theorem \ref{thm 1.1}.
Analogously to the definition of $\what \bxi_{N,M}$ we define the vector $\bxi = (\xi(1),\ldots,\xi(p(p+1)/2))$ using the columns of the matrix
\begin{align}\label{eq 3.1}
\bD =
\left(\Int_0^1\Int_0^1(C(t,s) - C^\ast(t,s))\vph_i(t)\vph_j(s)dt\,ds\right)_{i,j = 1,\ldots,p}
\end{align}
instead of $\what\Delta_N-\what\Delta^\ast_M$, i.e.
\[
 \what \bxi = \mbox{vech}(\bD).
\]

\begin{thm}\label{thm 3.3}
We assume that $H_A$, (\ref{eq 1.4}), (\ref{eq 1.2}) and (\ref{eq 1.3}) hold.
Then there exist random variables $\what{h}_1 = \what{h}_1(N,M),\ldots,\what{h}_{p(p+1)/2} = \what{h}_{p(p+1)/2}(N,M)$, taking values in $\{-1,1\}$ such that, as $N,M\to \infty$,
\begin{align}\label{eq 3.5}
 \max_{1\leq i\leq p(p+1)/2}\left| \what{\xi}_{N,M}(i) - \what{h}_i\xi(i)\right| = o_P (1)
\end{align}
and therefore
\begin{align}\label{eq 3.6}
 \left|\what{\bxi}_{N,M}\right| \toprob \left|\bxi\right|,
\end{align}
where $|\cdot |$ denotes the Euclidean norm. If $\bxi \neq \mathbf{0}$ and the $p$ largest eigenvalues of $C$ and $C^\ast$ are positive, we also have
\begin{align}\label{eq 3.7}
 \what T_1 \toprob \infty,\quad\text{as}\quad N,M\to \infty.
\end{align}
\end{thm}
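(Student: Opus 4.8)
The plan is to reduce everything to the consistency of the empirical eigenelements of the pooled operator. First I would observe that $\what{\mathfrak{R}}_{N,M}=\Theta_{N,M}\what{\mathfrak{C}}_N+(1-\Theta_{N,M})\what{\mathfrak{C}}^\ast_M$, so that under (\ref{eq 1.4}) and the second--moment part of (\ref{eq 1.3}) the law of large numbers for empirical covariance operators gives $\|\what{\mathfrak{R}}_{N,M}-\mathfrak{R}\|=o_P(1)$ in Hilbert--Schmidt norm, where $\mathfrak{R}$ has kernel $R=\Theta C+(1-\Theta)C^\ast$. Note that under $H_A$ this limit is a genuine mixture, so its eigenfunctions $\vph_i$ are in general those of neither $C$ nor $C^\ast$. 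Invoking the standard perturbation inequality for eigenprojections of compact self--adjoint operators (see e.g. \citet{HKbook} or \citet{bosq:2000}), which applies because (\ref{eq 1.2}) guarantees a spectral gap at $\lambda_p$, I obtain $\what\lam_i\toprob\lam_i$ together with signs $s_i=\sign\langle\what\vph_i,\vph_i\rangle$ satisfying $\|s_i\what\vph_i-\vph_i\|=o_P(1)$ for $1\le i\le p$.

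Next I would push this eigenfunction convergence through the bilinear projections. Writing
\[
\what\Delta_N(i,j)-\what\Delta^\ast_M(i,j)=\Int^1_0\Int^1_0\bigl(\what C_N(t,s)-\what C^\ast_M(t,s)\bigr)\what\vph_i(t)\what\vph_j(s)\,dt\,ds,
\]
I would replace $\what C_N,\what C^\ast_M$ by $C,C^\ast$ and $\what\vph_i,\what\vph_j$ by $s_i\vph_i,s_j\vph_j$ one factor at a time, bounding each replacement error by a Cauchy--Schwarz estimate of an $o_P(1)$ increment times a factor that is $O_P(1)$ under (\ref{eq 1.3}). This yields $\what\Delta_N(i,j)-\what\Delta^\ast_M(i,j)=s_is_j\bD(i,j)+o_P(1)$. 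Setting $\what h_k=s_is_j$, where $(i,j)$ is the pair of matrix indices attached to the vech--coordinate $k$ via (\ref{index2}), gives (\ref{eq 3.5}), since there are only finitely many coordinates. Because each $\what h_k\in\{-1,1\}$ and each $\xi(k)$ is a fixed constant, squaring and summing (\ref{eq 3.5}) over the $p(p+1)/2$ coordinates produces $|\what\bxi_{N,M}|^2=|\bxi|^2+o_P(1)$, which is (\ref{eq 3.6}).

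For the divergence (\ref{eq 3.7}) I would first show that each entry of $\what L_{N,M}$ — an empirical fourth moment of the projected scores minus a product of second moments — converges by the law of large numbers, which is legitimate since (\ref{eq 1.3}) supplies finite fourth moments. Substituting $\what\vph_m=s_m\vph_m+o_P(1)$ shows $\what L_{N,M}(k,k')\toprob\what h_k\what h_{k'}L_0(k,k')$, i.e. $\what L_{N,M}\toprob SL_0S$ with $S=\mathrm{diag}(\what h_1,\ldots,\what h_{p(p+1)/2})$ and $L_0=(1-\Theta)L_0^{(1)}+\Theta L_0^{(2)}$ the population covariance matrix of the score products built from the true $\vph_i$; since $S$ is an orthogonal ($\pm1$) matrix, $SL_0S$ has the same spectrum as $L_0$. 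The main obstacle is to establish that $L_0$ is positive definite: this is exactly where the hypothesis that the $p$ largest eigenvalues of both $C$ and $C^\ast$ are positive enters, guaranteeing that the projected scores are nondegenerate and that $L_0^{(1)}$ and $L_0^{(2)}$ have no common null direction, so their strictly convex combination (with $\Theta\in(0,1)$) is nondegenerate. Granting this, $\what L_{N,M}$ is invertible with probability tending to one and $\lambda_{\max}(\what L_{N,M})\toprob\lambda_{\max}(L_0)<\infty$, whence
\[
\what\bxi^T_{N,M}\what L^{-1}_{N,M}\what\bxi_{N,M}\ge\frac{|\what\bxi_{N,M}|^2}{\lambda_{\max}(\what L_{N,M})}
\]
stays bounded away from zero in probability by (\ref{eq 3.6}) and $\bxi\neq\bO$; multiplying by $NM/(N+M)\to\infty$ gives $\what T_1\toprob\infty$.
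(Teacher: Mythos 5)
Your proposal is correct and follows essentially the same route as the paper's proof: consistency of the pooled empirical eigenelements via the law of large numbers and perturbation bounds, a factor-by-factor replacement argument with Cauchy--Schwarz yielding \eqref{eq 3.5}--\eqref{eq 3.6} with $\what h_k = \what c_i\what c_j$, and then entrywise convergence of $\what L_{N,M}$ up to sign conjugation together with positive definiteness of the limiting matrix $L$ to deduce \eqref{eq 3.7}. The only caveat is that the positive definiteness of $L$ is left unverified in both arguments --- the paper defers it to ``some lengthy calculations,'' while you assert without proof that the eigenvalue hypothesis forces $L_0^{(1)}$ and $L_0^{(2)}$ to have no common null direction --- so on that step you are at the same level of rigor as the paper, no worse and no better.
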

The assumption that the $p$ largest eigenvalues of $C$ and $C^\ast$ 
are positive implies that the random functions 
$X_i,~i = 1,\ldots,N$, and $X^\ast_j,~j = 1,\ldots,M$, are not included
in a $(p-1)$-dimensional subspace.\\

The application of the test requires the selection of the number $p$
of the empirical FPC's to be used. A rule of thumb is to choose $p$
so that the first $p$ empirical FPC's in each sample (i.e.
those calculated as the eigenfunctions of $\what C_N$ and  $\what C^\ast_M$)
explain about 85--90\% of the variance in each sample.
Choosing $p$ too large generally negatively affects the finite sample
performance of tests of this type, and for this reason we do not study
asymptotics as $p$ tends to infinity. It is often illustrative
to apply the test for a range of the values of $p$; each $p$ specifies
a level of relevance of differences  in the curves or kernels.
A good practical approach is to look at the Karhunen--Lo\`eve approximations
of the curves in both samples, and choose $p$ which gives approximation
errors that can be  considered unimportant.

\section{A simulation study
and application to medfly data}
\label{s:sim}
We first describe the results of a simulations study designed 
to evaluate finite sample properties of the tests based on the
statistics $\what T_1$ and $\what T_2$.
The emphasis is on the robustness to the violation 
of the assumption of normality. 
We simulated Gaussian curves
as Brownian motions and  Brownian bridges, and non--Gaussian curves via
\begin{equation} \label{e:Zn}
Z_n(t) = A_n \sin(\pi t) + B_n
\sin(2\pi t)+ C_n \sin(4\pi t),
\end{equation}
where
$A_n = 5Y_1$, $B_n = 3Y_2$, $C_n = Y_3$ and $Y_1,Y_2,Y_3$ are
independent $t_5$-distributed random variables. All curves were
simulated  at 1000 equidistant points in the interval $[0,1]$, and
transformed into functional data objects using  the Fourier basis with
49 basis functions. For each data generating process we used one thousand
replications.

\begin{table}
\caption{Empirical sizes of the tests based on statistics $\what T_1$
and $\what T_2$ for non--Gaussian data.
The curves in each sample were generated according to
(\ref{e:Zn}).
}
\begin{center}
\begin{tabular}{lcccccc} \toprule
&\multicolumn{6}{c}{$p =2$}\\
\cmidrule{2-7}
&\multicolumn{3}{c}{$\what T_1$}&\multicolumn{3}{c}{$\what T_2$}\\
\cmidrule(r){2-4}\cmidrule(l){5-7}
Sample Sizes&1\%&5\%&10\%&1\%&5\%&10\%\\\hline
$N=M=100$&0.005&0.028&0.061&0.152&0.275&0.380 \\
$N=M=200$&0.003&0.021&0.058&0.163&0.314&0.402\\
$N=M=1000$&0.002&0.021&0.056&0.190&0.313&0.426\\
\midrule
&\multicolumn{6}{c}{$p=3$}\\
\cmidrule{2-7}
&\multicolumn{3}{c}{$\what T_1$}&\multicolumn{3}{c}{$\what T_2$}\\
\cmidrule(r){2-4}\cmidrule(l){5-7}
Sample Sizes&1\%&5\%&10\%&1\%&5\%&10\%\\\hline
$N=M=100$&0.004&0.028&0.065&0.167&0.332&0.434\\
$N=M=200$&0.004&0.024&0.064&0.194&0.338&0.423\\
$N=M=1000$&0.004&0.028&0.070&0.240&0.384&0.484\\
\bottomrule
\end{tabular}
\end{center}
\label{tb1}
\end{table}

Table~\ref{tb1}  displays the empirical sizes for non--Gaussian data.  The
test based on $\what T_2$ has severely inflated size, due to the
violation of the assumption of normality. As documented in
\citet{panaretos:2010}, and confirmed by our own simulations, this test
has very good empirical size when the data are Gaussian.  The test
based on $\what T_1$ is conservative, especially for smaller sample
sizes. This is true for both Gaussian and non--Gaussian data; there is
not much difference in the empirical size of this test for different
data generating processes. Reflecting its conservative size, statistic
$\what T_1$ leads to smaller power than $\what T_2$.  We also studied
a Monte Carlo version of the test based on the statistic $\what T_3 =
{NM}(N+M)^{-1} \what{\bxi}^T_{N,M} \what{\bxi}_{N,M}$,  and found that its
finite sample properties were similar to those of the test based on
$\what T_1$.

\begin{figure}
\caption{Ten randomly selected smoothed egg-laying curves of short-lived
medflies (left panel), and ten such curves for long--lived medflies
(right panel).}
\begin{center}
\includegraphics[scale = 0.465,page = 1]{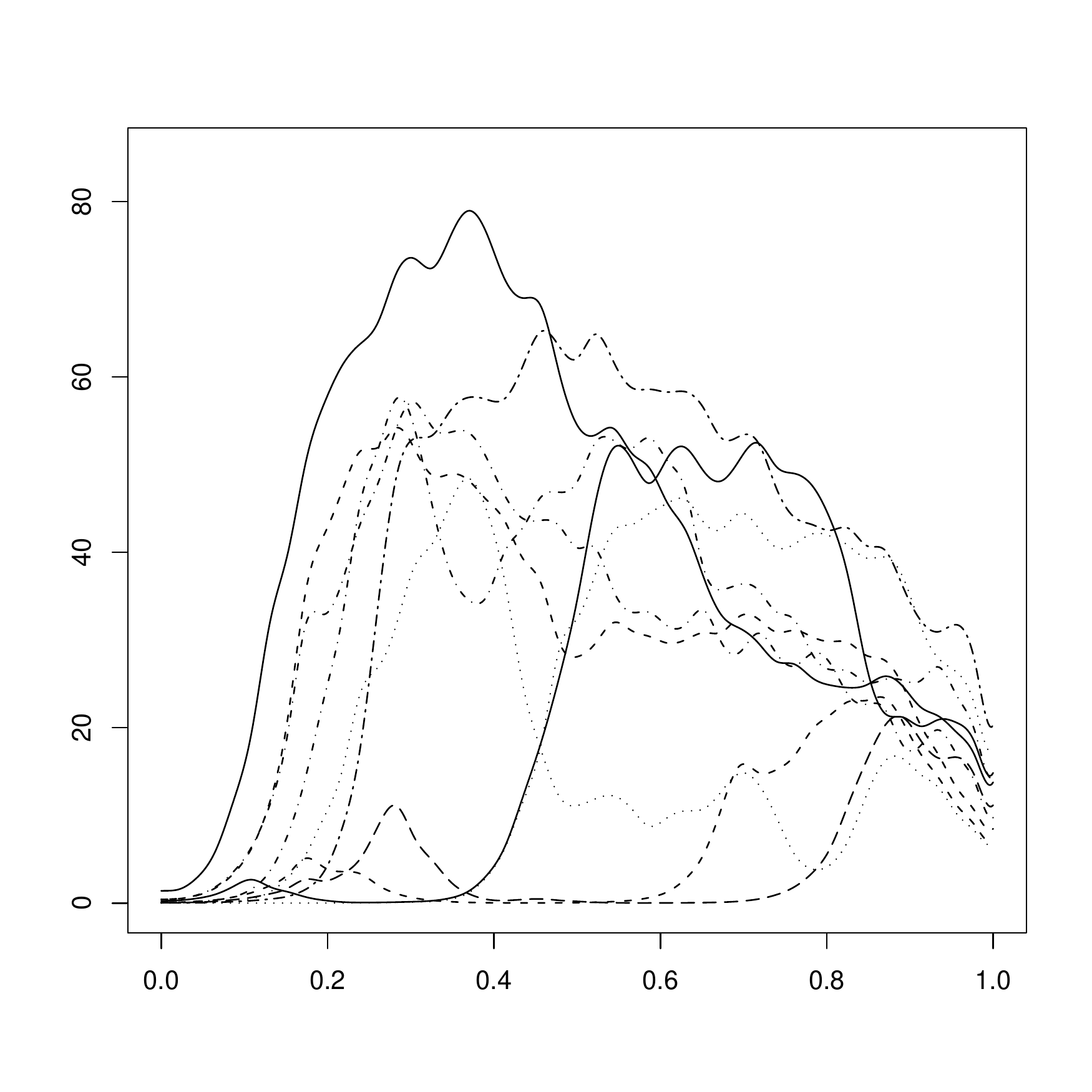}
\includegraphics[scale = 0.465,page = 2]{smoothed-functions-bspline-L49.pdf}
\end{center}
\label{f:egg-curves}
\end{figure}

\begin{figure}
\caption{Ten randomly selected smoothed egg-laying curves of short-lived
medflies (left panel), and ten such curves for long--lived medflies
(right panel), relative to the number of eggs laid in the fly's lifetime.}
\begin{center}
\includegraphics[scale = 0.465,page = 1]{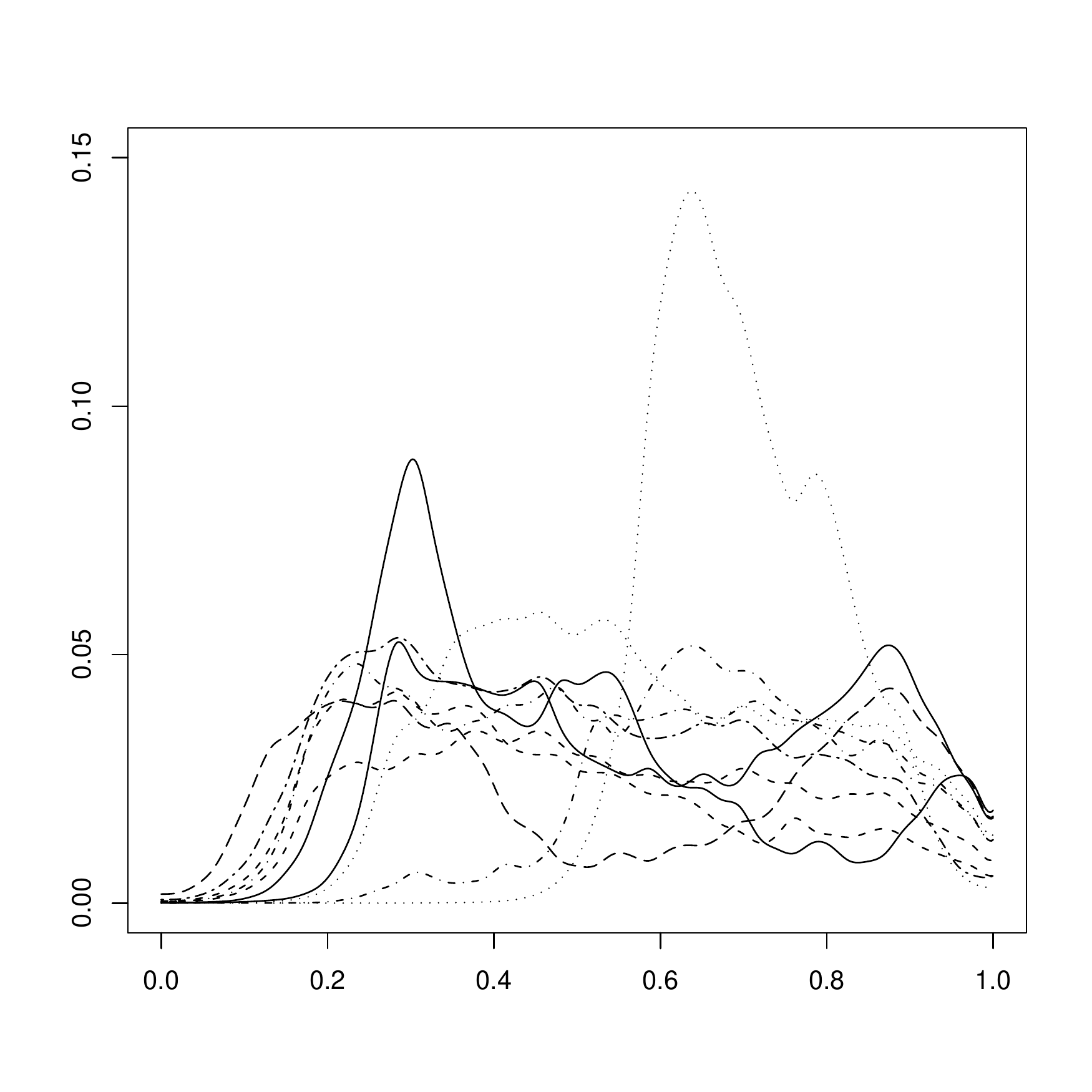}
\includegraphics[scale = 0.465,page = 2]{smoothed-rel-functions-bspline-L49.pdf}
\end{center}
\label{f:egg-curves-rel}
\end{figure}

We now describe the results of the application of both tests 
to an interesting data set
consisting of egg--laying trajectories of Mediterranean fruit flies
(medflies). The data  were kindly made available to
us by Hans--Georg M{\"u}ller. This data set
has been  extensively studied in biological
and statistical literature, see \citet{muller:stadtmuller:2005}
and references therein. We consider 534 egg-laying
curves 
of medflies who lived at least 34 days.
We examined two versions of these egg-laying curves,
the functions in either version are defined over an interval $[0,30]$,
and  $t\le 30$ is the day. 
Version 1 curves  (denoted $X_i(t)$) are 
the absolute counts of eggs laid by fly $i$ on 
day $t$. 
Version 2 curves (denoted $Y_i(t)$) 
are the counts of eggs laid by fly $i$ on  day $t$
{\em relative}  to the total number of eggs laid in the lifetime of  fly $i$.
The 534 flies are
classified into long-lived, i.e. those who lived 44 days or longer,
and short-lived, i.e. those who died before the end of the 43rd day
after birth. In the data set, there are 256 short-lived, and 278
long-lived flies.
This classification naturally defines two samples:
{\em Sample 1:}
 the egg-laying curves
$\{X_i(t) ($resp. $Y_i(t)),\  0< t \le 30, \ i=1,2, \ldots, 256\}$ of the  short-lived
flies.
{\em Sample 2:}
 the egg-laying curves
$\{X_j^*(t) ($resp. $Y_j^*(t)),\  0< t \le 30, \ j=1,2, \ldots, 278\}$ of the  long-lived
flies.
The egg-laying curves are very irregular; Figure \ref{f:egg-curves}
shows ten (smoothed) curves of short- and long-lived flies for 
version 1, Figure \ref{f:egg-curves-rel} shows ten (smoothed) 
curves for version 2 (both using a B-spline basis for the representation).

\begin{figure}
\caption{Normal QQ--plots for the scores of the version 2
medfly data with respect
to the first two Fourier basis functions. Left  -- sample 1,
Right -- sample 2.}
\begin{center}
\includegraphics[scale = 0.7]{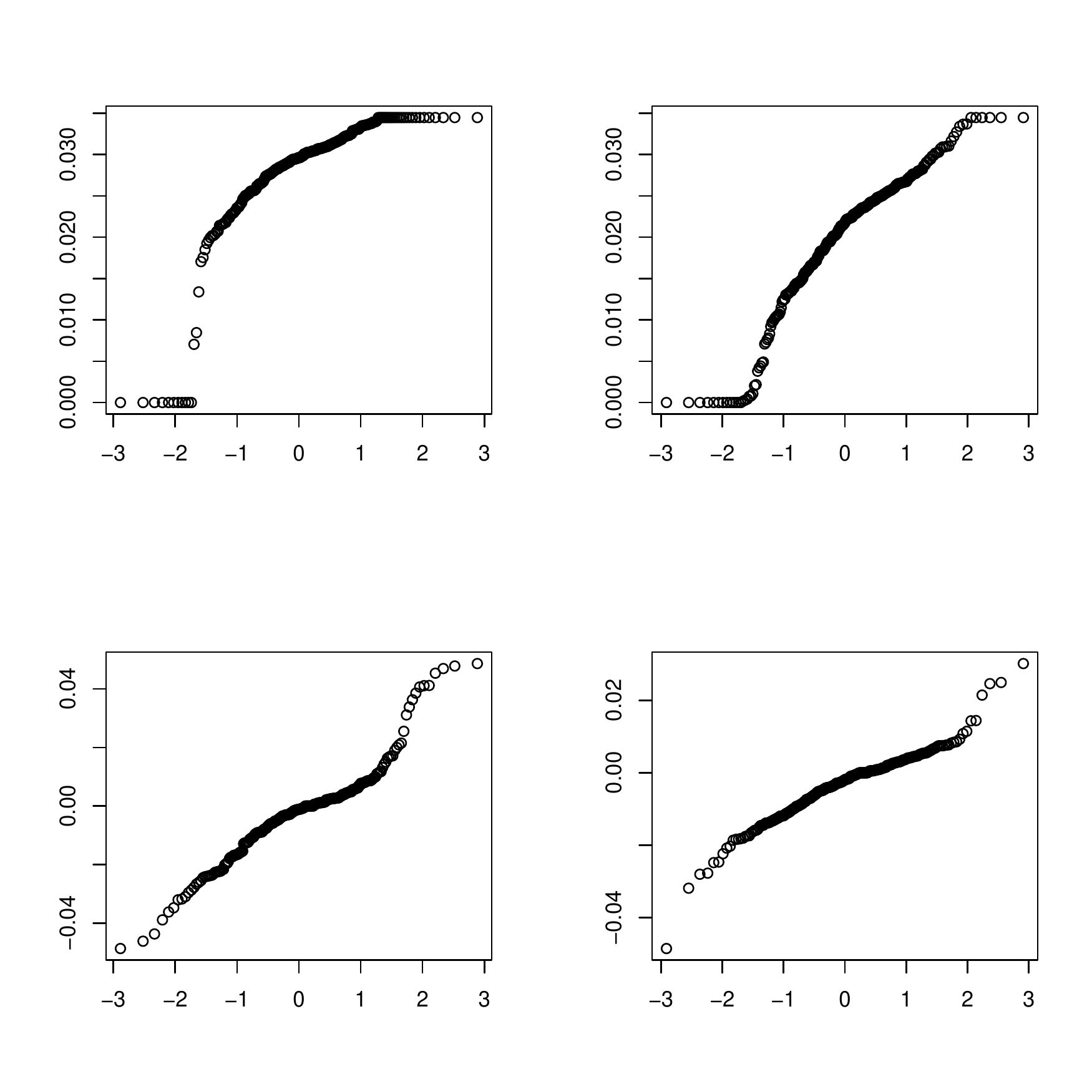}
\end{center}
\label{f:qq}
\end{figure}

\begin{table}
\caption{P--values (in percent) of the test based on
statistics $\what T_1$ and $\what T_2$
applied to absolute medfly data. Here $f_p$ denotes the fraction of the sample variance explained by the first $p$ FPCs, i.e. $f_p = (\sum_{k=1}^p \hat\lambda_k) / (\sum_{k = 1}^{N+M} \hat\lambda_k)$.}
\begin{center}
\begin{tabular}{|c|cccccccc|}
\hline
\multicolumn{9}{|c|}{P-values}\\
\hline
$p$      & 2 & 3  & 4 & 5 & 6 & 7 & 8 & 9 \\
\hline
$\what T_1$ & 82.70 & 36.22 & 30.59 & 63.84 & 37.71 & 39.03 & 33.77 & 34.77\\
$\what T_2$ & 0.54 & 0.13 & 0.11 & 0.12 & 0.02 & 0.00 & 0.00 & 0.00\\
\hline
$f_p$ & 72.93 & 78.36 & 81.87 & 83.94 & 85.62 & 87.08 & 88.49 & 89.72\\
\hline
\end{tabular}
\end{center}
\label{flies-absolute-bspline}
\end{table}

\begin{table}
\caption{P--values (in percent) of the test based on
statistics $\what T_1$ applied to relative medfly data;
 $f_p$ denotes the fraction of the sample variance explained by the 
 first $p$ FPCs, i.e. 
 $f_p = (\sum_{k=1}^p \hat\lambda_k) / (\sum_{k = 1}^{N+M} \hat\lambda_k)$.}
\begin{center}
\begin{tabular}{|c|ccccccc|}
\hline
\multicolumn{8}{|c|}{P-values}\\
\hline\hline
$p$   & 2 & 3  & 4 & 5 & 6 & 7 & 8 \\
\hline
$\what T_1$ & 0.14 & 0.06 & 0.33 & 1.50 & 3.79 & 4.53 & 10.28 \\
\hline
$f_p$ & 33.99 & 44.08 & 52.72 & 59.04 & 65.08 & 70.40 & 75.29\\
\hline\hline
$p$ & 9 & 10 & 11 & 12 & 13 & 14 & 15\\
\hline
$\what T_1$ & 5.51 & 2.78 & 5.32 & 3.21 & 1.78 & 6.28 & 3.80\\
\hline
$f_p$ & 79.91 & 83.72 & 86.58 & 89.02 & 91.34 & 93.30 & 95.03\\
\hline
\end{tabular}
\end{center}
\label{flies-relative-bspline}
\end{table}

Table~\ref{flies-absolute-bspline} shows the P--values for the  absolute egg-laying counts (version 1). For the statistic $\what T_1$ the null hypothesis cannot be rejected irrespective of the choice of $p$. For the statistic $\what T_2$, the result of the test varies depending on the choice of $p$. As explained in Section~\ref{sec1}, the usual recommendation
is  to use
the values of $p$ which explain 85 to 90 percent of the variance. 
For such values of $p$,  $\what T_2$ leads to a clear rejection. Since this test 
has however overinflated size, we conclude that there is little 
evidence that the covariance structures of version 1 curves for
long-- and short--lived flies are different. 
For the version 2 curves, the statistic $\what T_2$ yields P--values equal to zero (in machine precision), potentially indicating that the covariance structures 
for the short-- and long--lived flies are different. 
The assumption of a normal distribution is however questionable, as the QQ-plots in Figure~\ref{f:qq} show. These QQ-plots are constructed for the inner products $\langle Y_i, e_k \rangle$
and $\langle Y_i^*, e_k \rangle$, where the $Y_i$  are the curves from one of the samples (we cannot pool the data to construct QQ-plots because we test if the stochastic structures are different), and $e_k$  is the $k$th element of the Fourier basis. The normality of a functional sample implies the normality of all projections onto a complete orthonormal system 
. For $\langle X_i, e_k \rangle$, the QQ-plots show  a strong deviation from a straight line for some projections.
Almost all  projections $\langle Y_i, e_k \rangle$ have  QQ-plots indicating  a strong deviation 
from normality. It is therefore important to apply the robust test based on the statistic  $\what T_1$. The corresponding P--values for version 2  are displayed in Table~\ref{flies-relative-bspline}. For most values of $p$, these P--values
indicate the rejection of $H_0$. Many of them hover around 
the 5 percent level, but since the test is conservative, 
we can with confidence view them as favoring $H_A$. 

The above application confirms the properties of the statistics
established through the simulation study. It shows that while there 
is little evidence that the covariance structures for the 
absolute counts are different, there is strong evidence that they
are different for relative counts.

\section{Proofs of the results of Section \ref{sec1}}\label{sec2}

The proof of Theorem~\ref{thm 1.1} follows from several lemmas,
which we establish first.
We can and will assume without loss of generality that $\mu(t)=\mu^\ast (t)=0$ for all $t\in[0,1]$.

We will use the identity
\begin{align}\label{eq 2.1}
&\frac 1{N^{1/2}} \Sum^N_{k=1}\left(X_k(t)-\overline X_N (t)\right)
\left(X_k(s)-\overline X_N (s)\right)\\
&= \frac 1{N^{1/2}} \Sum^N_{k=1} X_k (t) X_k(s)- N^{1/2}
\overline X_N (t) \overline X_N(s),\notag
\end{align}
and an analogous identity for the second sample.

Our first lemma establishes bounds in probability which will
often be used in the proofs.

\begin{lem}\label{lem 2.1}
Under the assumptions of  Theorem~\ref{thm 1.1}, as $N,M\to \infty$,
\begin{align}
&\label{eq 2.2} \left\Vert N^{-1/2} \Sum^N_{k=1} \left\{ X_k(t) X_k (s)
-C(t,s)\right\}\right\Vert=O_P(1),
\\ \label{eq 2.3}
&\left\Vert N^{1/2} \overline X_N(t)\right\Vert = O_P(1),
\end{align}
and
\begin{align}
&\label{eq 2.4}\left\Vert M^{-1/2} \Sum^M_{k=1} \left\{ X^\ast_k(t) X^\ast_k (s)
-C^\ast(t,s)\right\}\right\Vert= O_P(1),
\\
&\label{eq 2.5}\left\Vert M^{1/2} \overline X^\ast_M(t)\right\Vert = O_P(1).
\end{align}
\end{lem}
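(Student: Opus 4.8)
The plan is to establish all four bounds by a single second–moment argument: each quantity is the $L_2$–norm of a normalized sum of i.i.d.\ mean–zero random elements of an appropriate $L_2$ space, so the expected squared norm of the sum reduces to the expected squared norm of a single summand, and finiteness of that single second moment together with Markov's inequality yields the $O_P(1)$ conclusion.

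For \eqref{eq 2.2} I would set $Y_k(t,s)=X_k(t)X_k(s)-C(t,s)$ and regard $Y_1,Y_2,\ldots$ as i.i.d.\ random elements of $L_2([0,1]^2)$. Since we assume $\mu\equiv 0$, we have $E[X_k(t)X_k(s)]=C(t,s)$, so $EY_k=0$. Expanding the squared norm and using independence and the mean–zero property to annihilate the cross terms gives
\[
E\left\Vert N^{-1/2}\Sum^N_{k=1}Y_k\right\Vert^2 = \frac 1N\Sum^N_{k=1}E\|Y_k\|^2 = E\|Y_1\|^2 .
\]
It then remains to check that $E\|Y_1\|^2<\infty$. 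Using $(a-b)^2\le 2a^2+2b^2$,
\[
E\|Y_1\|^2 \le 2\,E\Int^1_0\Int^1_0 X_1^2(t)X_1^2(s)\,dt\,ds + 2\Int^1_0\Int^1_0 C^2(t,s)\,dt\,ds .
\]
The first term factors as $2\,E\|X_1\|^4$, and Jensen's inequality applied to the probability measure $dt$ on $[0,1]$ gives $\|X_1\|^4=(\int_0^1 X_1^2(t)\,dt)^2\le\int_0^1 X_1^4(t)\,dt$, so $E\|X_1\|^4\le\int_0^1 EX_1^4(t)\,dt<\infty$ by \eqref{eq 1.3}. For the second term, Cauchy--Schwarz yields $C^2(t,s)\le EX_1^2(t)\,EX_1^2(s)$, whence $\int_0^1\int_0^1 C^2\le(\int_0^1 EX_1^2(t)\,dt)^2<\infty$, again by \eqref{eq 1.3}. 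Markov's inequality then delivers \eqref{eq 2.2}.

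The bound \eqref{eq 2.3} is simpler, since $N^{1/2}\overline X_N=N^{-1/2}\Sum^N_{k=1}X_k$ is itself an i.i.d.\ mean–zero sum in $L_2[0,1]$; the same orthogonality computation gives $E\|N^{1/2}\overline X_N\|^2=E\|X_1\|^2=\int_0^1 EX_1^2(t)\,dt<\infty$, and Markov's inequality finishes it. The starred statements \eqref{eq 2.4} and \eqref{eq 2.5} follow verbatim with $X_k,C,N$ replaced by $X^\ast_k,C^\ast,M$. The only step demanding genuine care — and the sole place where the fourth–moment hypothesis \eqref{eq 1.3} is used — is the verification that $E\|Y_1\|^2<\infty$, in particular the passage from the pointwise bound $\int_0^1 EX_1^4(t)\,dt<\infty$ to the $L_2([0,1]^2)$ second moment of $Y_1$ via Jensen and Cauchy--Schwarz; everything else is a routine orthogonality identity followed by Markov's inequality.
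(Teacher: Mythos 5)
Your proof is correct and takes essentially the same route as the paper's: the expected squared $L_2$-norm of the normalized i.i.d.\ mean-zero sum collapses by orthogonality to the second moment of a single summand, and Markov's inequality yields each $O_P(1)$ bound. The only difference is that you carefully verify $E\Vert Y_1\Vert^2<\infty$ via Jensen and Cauchy--Schwarz from \eqref{eq 1.3}, a step the paper leaves implicit.
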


\begin{proof}[\bf Proof.]
First we note that
\begin{align*}
&E\int^1_0 \int^1_0\left [\frac 1{N^{\frac 12}}
\Sum^N_{k=1}\left\{ X_k(t) X_k(s)- C(t,s)\right\}
\right ]^2 dt\,ds
= \int^1_0 \int^1_0 E\left\{ X_1 (t) X_1(s)-C (t,s)\right\}^2 dt\,ds,
\end{align*}
so, by Markov's inequality, we have
\[
\left\Vert \frac 1{N^{\frac12}}
\Sum^N_{k=1} \left\{X_k (t) X_k(s)-C(t,s)\right\} \right\Vert^2=O_P (1).
\]
Similar arguments yield (\ref{eq 2.3}) -- (\ref{eq 2.5}).
\end{proof}

The next lemma shows that the estimation of the
mean functions, cf.  the definition of
the projections $\what a_k(i)$ and  $\what a_k^*(j)$ in \eqref{eq 3.2} and \eqref{eq 3.3}, has
an asymptotically negligible effect.

\begin{lem}\label{lem 2.2}
Under the assumptions of  Theorem~\ref{thm 1.1},
for all $1\le i,j\le p$, as $N,M\to \infty$,
\[
N^{1/2} \what\Delta_N (i,j)=\frac 1{N^{1/2}}
\Sum^N_{k=1} \langle X_k, \what\vph_i\rangle
\langle X_k, \what\vph_j\rangle + O_P \left( N^{-1/2}\right)
\]
and
\[
 M^{1/2}\what\Delta^\ast_M (i,j) = \frac 1{M^{1/2}} \Sum^M_{k=1}
 \langle X^\ast_k, \what\vph_i \rangle
\langle X^\ast_k,\what\vph_j \rangle + O_P
 \left( M^{-1/2}\right).
\]
\end{lem}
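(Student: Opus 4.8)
The plan is to project the functional identity (\ref{eq 2.1}) onto the tensor product $\what\vph_i\otimes\what\vph_j$ and then to control the resulting mean-correction term by the rate supplied in (\ref{eq 2.3}) of Lemma~\ref{lem 2.1}. I establish the first identity; the second follows verbatim with $M$ and the starred quantities in place of $N$, using (\ref{eq 2.5}) instead of (\ref{eq 2.3}).

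First I would observe that $\what\Delta_N(i,j)=\langle\what{\mathfrak C}_N\what\vph_i,\what\vph_j\rangle$, since
\[
\langle\what{\mathfrak C}_N\what\vph_i,\what\vph_j\rangle
=\Int^1_0\Int^1_0\what C_N(t,s)\what\vph_i(s)\what\vph_j(t)\,ds\,dt
=\frac1N\Sum^N_{k=1}\what a_k(i)\what a_k(j).
\]
Hence $N^{1/2}\what\Delta_N(i,j)$ is the double integral of $N^{1/2}\what C_N(t,s)$ against $\what\vph_i(s)\what\vph_j(t)$. Substituting the right-hand side of (\ref{eq 2.1}) for $N^{1/2}\what C_N(t,s)$ and integrating term by term yields the exact decomposition
\[
N^{1/2}\what\Delta_N(i,j)=\frac1{N^{1/2}}\Sum^N_{k=1}\langle X_k,\what\vph_i\rangle\langle X_k,\what\vph_j\rangle
-N^{1/2}\langle\overline X_N,\what\vph_i\rangle\langle\overline X_N,\what\vph_j\rangle.
\]
Thus the remainder claimed in the lemma is precisely $-N^{1/2}\langle\overline X_N,\what\vph_i\rangle\langle\overline X_N,\what\vph_j\rangle$, and it suffices to show this is $O_P(N^{-1/2})$ uniformly in $1\le i,j\le p$.

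For the remainder I would apply the Cauchy--Schwarz inequality together with $\|\what\vph_i\|=1$ to get $|\langle\overline X_N,\what\vph_i\rangle|\le\|\overline X_N\|$, and likewise for $j$, so that
\[
\bigl|N^{1/2}\langle\overline X_N,\what\vph_i\rangle\langle\overline X_N,\what\vph_j\rangle\bigr|
\le N^{1/2}\|\overline X_N\|^2.
\]
By (\ref{eq 2.3}) we have $\|N^{1/2}\overline X_N\|=O_P(1)$, i.e. $\|\overline X_N\|=O_P(N^{-1/2})$, so the right-hand side is $N^{1/2}\cdot O_P(N^{-1})=O_P(N^{-1/2})$. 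This bound does not depend on $i$ or $j$, which delivers the uniformity over $1\le i,j\le p$ automatically.

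I do not expect a genuine analytic obstacle here: the statement is an exact algebraic identity together with a one-line stochastic bound. The only point requiring care is that $\what\vph_i$ is data-dependent and is not controlled individually, so one should avoid any argument that treats the scores coordinatewise; bounding the projection of $\overline X_N$ by its norm sidesteps this, since the eigenfunctions then enter only through $\|\what\vph_i\|=1$. This is precisely why the estimate (\ref{eq 2.3}), which makes no reference to the $\what\vph_i$, is the right tool.
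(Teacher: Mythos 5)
Your proposal is correct and follows essentially the same route as the paper's own proof: you project the identity (\ref{eq 2.1}) onto $\what\vph_i\otimes\what\vph_j$, isolate the remainder $N^{1/2}\langle\overline X_N,\what\vph_i\rangle\langle\overline X_N,\what\vph_j\rangle$, and bound it by $N^{1/2}\Vert\overline X_N\Vert^2=O_P(N^{-1/2})$ via the Cauchy--Schwarz inequality, $\Vert\what\vph_i\Vert=1$, and (\ref{eq 2.3}), exactly as the authors do. Your closing remark---that bounding the projection by $\Vert\overline X_N\Vert$ is what makes the data-dependence of $\what\vph_i$ harmless---is a correct articulation of why the paper's argument works.
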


\begin{proof}[\bf Proof.]
Using (\ref{eq 2.1}) and (\ref{eq 2.3}) we have by the Cauchy-Schwarz inequality,
\begin{align*}
&\left|\int^1_0 \int^1_0 N^{1/2} \overline X_N (t) \overline X_N(s)
\what\vph_i (t) \what\vph_j (s) dt\,ds\right|
\\
&= N^{-1/2}\left\vert \int^1_0 N^{1/2} \overline X_N(t) \what\vph_i(t) dt \right\vert
\left\vert \int^1_0 N^{1/2} \overline X_N(s) \what\vph_j(s) ds \right\vert
\\
&\le N^{-1/2}\left( \int^1_0 \left(N^{1/2} \overline X_N(t)\right)^2 dt
 \int^1_0 \what\vph^2_i(t) dt \right)^{1/2}
 \left(\int^1_0   \left(N^{1/2} \overline X_N(s)\right)^2 ds
 \int^1_0 \what\vph^2_j(s) ds \right)^{1/2}
 \\
 &= N^{-1/2} \int^1_0 \left(N^{1/2} \overline X_N(t)\right)^2 dt
 \\
 &= O_P \left( N^{-1/2}\right).
\end{align*}
The second part can be proven in the same way.
\end{proof}

We now state  bounds on the distances between the estimated and 
the population eigenvalues and eigenfunctions. These bounds are true
under the null hypothesis, and extend the corresponding one sample
bounds.

\begin{lem}\label{lem 2.3}
If the conditions of Theorem \ref{thm 1.1} are satisfied, then, as $N,M\to \infty$,
\[
\Max_{1\le i\le p} |\what \lam_i -\lam_i| = O_P \left((N+M)^{-1/2}\right)
\]
and
\[
\Max_{1\le i\le p} \|\what\vph_i -\what c_i\vph_i\|= O_P
\left((N+M)^{-1/2}\right),
\]
where
\[
\what c_i=\what c_i (N,M)= {\rm sign} (\langle \what\vph_i, \vph_i\rangle).
\]
\end{lem}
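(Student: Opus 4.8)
The plan is to reduce the whole statement to a single bound on the Hilbert--Schmidt distance between $\what{\mathfrak R}_{N,M}$ and $\mathfrak R$ and then to feed that bound into the classical perturbation inequalities for self-adjoint compact operators. Since under $H_0$ we have $C=C^\ast$ and hence $R=C=C^\ast$, the centering identity (\ref{eq 2.1}) lets me write the kernel difference as
\begin{align*}
\what R_{N,M}(t,s)-R(t,s)
&=\Theta_{N,M}\,\frac1N\Sum^N_{k=1}\left\{X_k(t)X_k(s)-C(t,s)\right\}\\
&\quad+(1-\Theta_{N,M})\,\frac1M\Sum^M_{k=1}\left\{X^\ast_k(t)X^\ast_k(s)-C^\ast(t,s)\right\}\\
&\quad-\Theta_{N,M}\,\overline X_N(t)\overline X_N(s)-(1-\Theta_{N,M})\,\overline X^\ast_M(t)\overline X^\ast_M(s).
\end{align*}
The crucial point is that the deterministic ``weight-mismatch'' term $(\Theta_{N,M}-\Theta)(C-C^\ast)$, which would otherwise survive when $\Theta_{N,M}$ is replaced by $\Theta$ in $R$, vanishes identically under $H_0$; this is precisely where the null hypothesis enters.

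Next I bound the $L^2([0,1]^2)$ norm of each of the four terms, which equals the Hilbert--Schmidt norm of the corresponding operator and dominates its operator norm. By (\ref{eq 2.2}) the first average has norm $\Theta_{N,M}N^{-1/2}\cdot O_P(1)$, and since $\Theta_{N,M}\to\Theta\in(0,1)$ forces $N$ and $N+M$ to be of the same order, this is $O_P\left((N+M)^{-1/2}\right)$; (\ref{eq 2.4}) handles the second average in the same way. For the two mean-correction terms, $\|\overline X_N(t)\overline X_N(s)\|=\|N^{1/2}\overline X_N\|^2/N=O_P(N^{-1})$ by (\ref{eq 2.3}), which is of the smaller order $O_P\left((N+M)^{-1}\right)$, and similarly for the starred term via (\ref{eq 2.5}). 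Summing the four contributions gives
\[
\left\|\what{\mathfrak R}_{N,M}-\mathfrak R\right\|=O_P\left((N+M)^{-1/2}\right).
\]

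Finally I transfer this operator bound to the eigenelements using the standard perturbation results for covariance operators (see, e.g., \citet{bosq:2000} or \citet{HKbook}). For the eigenvalues, the Weyl/Lidskii inequality yields $\Max_{1\le i\le p}|\what\lam_i-\lam_i|\le\|\what{\mathfrak R}_{N,M}-\mathfrak R\|$ at once. For the eigenfunctions, assumption (\ref{eq 1.2}) guarantees that the spectral gaps, $\alpha_1=\lam_1-\lam_2$ and $\alpha_i=\min(\lam_{i-1}-\lam_i,\lam_i-\lam_{i+1})$ for $2\le i\le p$, are strictly positive constants, so the Davis--Kahan-type bound gives $\|\what\vph_i-\what c_i\vph_i\|\le (2\sqrt2/\alpha_i)\|\what{\mathfrak R}_{N,M}-\mathfrak R\|$, where the sign factor $\what c_i=\sign(\langle\what\vph_i,\vph_i\rangle)$ is forced because eigenfunctions are only determined up to sign. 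Taking the maximum over the finitely many indices $1\le i\le p$, and using that the $\alpha_i$ are bounded away from $0$, delivers the second claim.

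The main obstacle is the bookkeeping in the first step: one must check that, once centering has been absorbed via (\ref{eq 2.1}), the only deterministic remainder is the weight mismatch, that it cancels under $H_0$, and that $\Theta_{N,M}\to\Theta\in(0,1)$ is exactly what converts the separate $N^{-1/2}$ and $M^{-1/2}$ rates into the common rate $(N+M)^{-1/2}$. Once the positive gaps from (\ref{eq 1.2}) are in hand, the perturbation step is a routine black-box application.
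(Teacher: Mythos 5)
Your proof is correct and takes essentially the same route as the paper: the paper likewise combines Lemma \ref{lem 2.1} with $C=C^\ast$ under $H_0$ to get $\left\Vert \what{\mathfrak R}_{N,M}-\mathfrak R\right\Vert = O_P\left((N+M)^{-1/2}\right)$ (stated there in the equivalent form $O_P\left((N^{1/2}+M^{1/2})/(N+M)\right)$) and then invokes the one-sample perturbation bounds from Chapter 2 of \citet{HKbook}. Your explicit kernel decomposition and the Weyl and Davis--Kahan inequalities with spectral gaps $\alpha_i$ guaranteed by (\ref{eq 1.2}) are precisely the content of that cited black box, so nothing is missing.
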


\begin{proof}[\bf Proof.]
It follows from (\ref{eq 2.2}) -- (\ref{eq 2.5}) and the assumption
$C=C^\ast$ that
\[
\left\Vert \what R_{N,M} - C \right\Vert =
O_P \left(\left( N^{1/2} + M^{1/2}\right)\Big/(N+M)\right),
 \]
and since $N^{1/2} + M^{1/2} \le 2(N+M)^{1/2}$,
the result follows from the corresponding one sample bounds,
see e.g. Chapter 2 of \citet{HKbook}.
\end{proof}

Lemma \ref{lem 2.3} now allows us to replace the
estimated eigenfunctions by their population counterparts.
The random signs $\what c_i$ must appear in the formulation
of Lemma~\ref{lem 2.4}, but they cancel in the subsequent results.

\begin{lem}\label{lem 2.4}
If the conditions of Theorem \ref{thm 1.1} are satisfied, then, for all $1\leq i,j \leq p$, as $N,M\to \infty$,
\begin{align*}
&\left(\frac{NM}{N+M}\right)^{1/2}
\left(\what\Delta_N (i,j) -\what\Delta^\ast_M (i,j)\right)
\\
&=\left(\frac{NM}{N+M}\right)^{1/2} \left\{\frac 1N \Sum^N_{k=1}
\langle X_k,\what c_i \vph_i\rangle \langle X_k,\what c_j \vph_j\rangle
-\frac 1M \Sum^M_{k=1} \langle X^\ast_k,\what c_i \vph_i\rangle
\langle X^\ast_k,\what c_j \vph_j\rangle  \right\}
+ o_P (1).
\end{align*}
\end{lem}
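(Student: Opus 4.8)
The plan is to reduce the claim, via the two lemmas already established, to an \emph{exact} algebraic cancellation between the two samples that is forced by $H_0$. First I would use Lemma~\ref{lem 2.2} to discard the mean-estimation effect: after multiplying by $(NM/(N+M))^{1/2}$ the remainders there become $O_P(N^{-1/2})$ and $O_P(M^{-1/2})$, i.e. $o_P(1)$, so it suffices to compare $\frac1N\Sum^N_{k=1}\langle X_k,\what\vph_i\rangle\langle X_k,\what\vph_j\rangle$ with $\frac1N\Sum^N_{k=1}\langle X_k,\what c_i\vph_i\rangle\langle X_k,\what c_j\vph_j\rangle$ (and the starred analogue). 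Writing $\what\eta_i=\what\vph_i-\what c_i\vph_i$, so that $\|\what\eta_i\|=O_P((N+M)^{-1/2})$ by Lemma~\ref{lem 2.3}, and using $\what c_i^2=1$ together with the orthonormality of the $\vph_i$, the product identity
\begin{align*}
\langle X_k,\what\vph_i\rangle\langle X_k,\what\vph_j\rangle-\langle X_k,\what c_i\vph_i\rangle\langle X_k,\what c_j\vph_j\rangle
=\langle X_k,\what\eta_i\rangle\langle X_k,\what\vph_j\rangle+\langle X_k,\what c_i\vph_i\rangle\langle X_k,\what\eta_j\rangle
\end{align*}
reduces everything to two terms that are \emph{linear} in $\what\eta$.

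For the unstarred sample the rescaled error is $(1-\Theta_{N,M})^{1/2}N^{-1/2}\Sum^N_{k=1}\langle X_k,\what\eta_i\rangle\langle X_k,\what\vph_j\rangle$ plus the symmetric term, the prefactor being exactly $(NM/(N+M))^{1/2}/N$. Splitting $N^{-1/2}\Sum^N_{k=1}X_k(t)X_k(s)=G_N(t,s)+N^{1/2}C(t,s)$ with $G_N=N^{-1/2}\Sum^N_{k=1}\{X_kX_k-C\}$, the centered part contributes $(1-\Theta_{N,M})^{1/2}\langle\langle G_N,\what\eta_i\otimes\what\vph_j\rangle\rangle$, where $\langle\langle\,\cdot\,,\cdot\,\rangle\rangle$ is the $L_2([0,1]^2)$ pairing; by Cauchy--Schwarz this is at most $\|G_N\|\,\|\what\eta_i\|\,\|\what\vph_j\|=O_P(1)\cdot O_P((N+M)^{-1/2})=o_P(1)$, using \eqref{eq 2.2} of Lemma~\ref{lem 2.1}. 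The same bound disposes of the centered contributions of both samples and both summands of the product identity.

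The substantial part is the ``bulk'' piece coming from $N^{1/2}C(t,s)$ (resp. $M^{1/2}C^\ast(t,s)$). Collecting the two linear terms and using that under $H_0$ the $\vph_i$ are eigenfunctions of $\mathfrak{C}=\mathfrak{R}$, so $\mathfrak{C}(\what c_i\vph_i)=\lam_i\what c_i\vph_i$, the cross terms telescope: $\langle\mathfrak{C}\what\eta_i,\what\vph_j\rangle+\langle\mathfrak{C}(\what c_i\vph_i),\what\eta_j\rangle$ collapses to $\langle\mathfrak{C}\what\vph_i,\what\vph_j\rangle-\lam_i\mathbf1\{i=j\}$. Hence the unstarred bulk term is $(1-\Theta_{N,M})^{1/2}N^{1/2}[\langle\mathfrak{R}\what\vph_i,\what\vph_j\rangle-\lam_i\mathbf1\{i=j\}]$ and the starred one is $\Theta_{N,M}^{1/2}M^{1/2}[\langle\mathfrak{R}\what\vph_i,\what\vph_j\rangle-\lam_i\mathbf1\{i=j\}]$, where $H_0$ enters twice, to identify $\mathfrak{C}=\mathfrak{C}^\ast=\mathfrak{R}$ and to produce the same bracket in both. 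The main obstacle is that neither bulk term is $o_P(1)$: writing $\langle\mathfrak{R}\what\vph_i,\what\vph_j\rangle-\lam_i\mathbf1\{i=j\}=\langle(\mathfrak{R}-\what{\mathfrak{R}}_{N,M})\what\vph_i,\what\vph_j\rangle+(\what\lam_i-\lam_i)\mathbf1\{i=j\}$ and invoking Lemma~\ref{lem 2.3} (and the operator bound $\|\what R_{N,M}-C\|=O_P((N+M)^{-1/2})$ from its proof) shows the bracket is $O_P((N+M)^{-1/2})$, so the $N^{1/2}$ prefactor leaves a genuine $O_P(1)$ remainder that cannot be neglected for either sample on its own.

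The resolution, and the crux of the argument, is that in the \emph{difference} $\what\Delta_N-\what\Delta^\ast_M$ the two bulk terms cancel identically, since the brackets coincide and the prefactors agree, $(1-\Theta_{N,M})^{1/2}N^{1/2}=\Theta_{N,M}^{1/2}M^{1/2}=(NM/(N+M))^{1/2}$. Thus the only surviving contributions are the $o_P(1)$ centered terms and the $o_P(1)$ mean-estimation remainders from Lemma~\ref{lem 2.2}, which gives the asserted approximation.
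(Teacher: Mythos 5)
Your proposal is correct and takes essentially the same route as the paper's proof: Lemma~\ref{lem 2.2} removes the mean effect, the same telescoping decomposition $\what\vph_i\what\vph_j-\what c_i\vph_i\what c_j\vph_j=(\what\vph_i-\what c_i\vph_i)\what\vph_j+\what c_i\vph_i(\what\vph_j-\what c_j\vph_j)$ combined with the Cauchy--Schwarz bound $\Vert G_N\Vert\bigl(\Vert\what\vph_i-\what c_i\vph_i\Vert+\Vert\what\vph_j-\what c_j\vph_j\Vert\bigr)=o_P(1)$ (Lemmas~\ref{lem 2.1} and \ref{lem 2.3}) disposes of the centered parts, and the non-negligible bulk terms cancel exactly between the two samples because $C=C^\ast$ under $H_0$. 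Your explicit eigenvalue computation showing that each bulk term is genuinely $O_P(1)$, so that the two-sample cancellation is indispensable, merely spells out what the paper compresses into its closing line ``Since $C=C^\ast$, the lemma is proven.''
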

\begin{proof}[\bf Proof.]
We write
\begin{align*}
&\frac 1N \Sum^N_{k=1} \langle X_k,\what\vph_i\rangle \langle X_k,\what\vph_j\rangle
- \int^1_0 \int^1_0 C(t,s)\what\vph_i(t) \what\vph_j(s) dt~ds
\\
&= N^{1/2} \int^1_0 \int^1_0 \left\{\frac 1{N^{1/2}} \Sum^N_{k=1}
\big(X_k (t) X_k (s)-C (t,s)\big)\right\} \what\vph_i (t)\what\vph_j (s) dt\,ds.
\end{align*}
Using  Lemmas \ref{lem 2.1} -- \ref{lem 2.3} we get
\begin{align*}
&\left\vert \int^1_0 \int^1_0 \left\{ \frac 1{N^{1/2}} \sum^N_{k=1}
\big(X_k (t) X_k (s)
-C (t,s)\big)\right\} \big(\what\vph_i(t) \what\vph_j (s)
-\what c_i \vph_i(t)\what c_j \vph_j (s)\big)dt \,ds\right\vert
\\
&=
\Bigg\vert   \int^1_0 \int^1_0 \left\{ \frac 1{N^{1/2}} \sum^N_{k=1}
\big(X_k (t) X_k (s)-C(t,s)\big)\right\}
\\
&\qquad\times \left\{\big(\what\vph_i (t) - \what c_i \vph_i(t)\big)\what\vph_j (s) + \what c_i \vph_i (t)
\big(\what\vph_j (s)-\what c_j \vph_j (s)\big) \right\}dt\,ds\Bigg\vert
\\
&\le \Bigg( \int^1_0 \int^1_0 \left\{ \frac 1{N^{1/2}} \sum^N_{k=1}
\big(X_k (t) X_k (s)-C (t,s)\big)\right\}^2 dt\,ds
\\
&\qquad\times \int^1_0 \int^1_0 \big(\what\vph_i(t)-\what c_i \vph_i(t)\big)^2
\what\vph^2_j (s) dt\,ds\Bigg)^{1/2}
\\
&\quad +\Bigg(\int^1_0 \int^1_0 \left\{ \frac 1{N^{1/2}} \sum^N_{k=1}
\big(X_k (t) X_k (s)-C(t,s)\big)\right\}^2 dt\,ds
\\
&\qquad \times \int^1_0 \int^1_0 \vph^2_i (t) \big(\what\vph_j(s)-\what c_j
\vph_j(s)\big)^2  dt\,ds\Bigg)^{1/2}
\\
&= \left\Vert \frac 1{N^{1/2}} \Sum^N_{k=1} \big(X_k (t) X_k (s)-C(t,s)\big) \right\Vert
\big\{ \left\Vert \what\vph_i-\what c_i\vph_i\big\Vert
+\big\Vert \what\vph_j-\what c_j\vph_j\right\Vert\big\}
\\
&= o_P (1).
\end{align*}
Similar arguments give that
\[
\left\vert \int^1_0 \int^1_0 \left\{ \frac 1{M^{1/2}} \sum^M_{k=1}
\big(X^\ast_k (t)X^\ast_k(s) -C^\ast (t,s)\big)\right\}
\big\{\what\vph_i (t) \what\vph_j (s)
- \what c_i \vph_i (t) \what c_j \vph_j (s)\big\} dt\,ds \right\vert
= o_P (1).
\]
Since $C=C^\ast$, the lemma is proven.
\end{proof}

The previous lemmas isolated the main terms in
the differences $\what\Delta_N (i,j) -\what\Delta^\ast_M (i,j)$.
The following lemma describes the limits of these main terms
(without the random signs).

\begin{lem}\label{lem 2.5}
If the conditions of Theorem \ref{thm 1.1} are satisfied, then, as $N,M\to\infty$,
\[
\left\{ \Delta_{N,M} (i,j), 1\le i,j\le p\right\}~\tocalD~
\left\{ \Delta (i,j), 1\le i,j\le p\right\},
\]
where
\[
\Delta_{N,M} (i,j) =\left(\frac{NM}{N+M}\right)^{1/2}
\left\{ \frac 1N \Sum^N_{k=1} \langle X_k ,\vph_i\rangle
\langle X_k, \vph_j\rangle
- \frac 1M \Sum^M_{k=1} \langle X^\ast_k,\vph_i \rangle
\langle X^\ast_k,\vph_j \rangle\right\},
\]
and  $\left\{\Delta (i,j), 1\le i,j\le p\right\}$ is a Gaussian matrix with
$E\Delta (i,j)=0$ and
\begin{align*}
E\Delta (i,j)\Delta (i',j')&= (1-\Theta) \left\{ E\big(\langle X_1,\vph_i \rangle
\langle X_1,\vph_j\rangle\langle X_1,\vph_{i'} \rangle \langle X_1,\vph_{j'}\rangle\big)\right.
\\
&\qquad -\left. E\big(\langle X_1,\vph_i \rangle \langle X_1,\vph_j\rangle\big)
E\big(\langle X_1,\vph_{i'} \rangle \langle X_1,\vph_{j'}\rangle\big)\right\}
\\
&\quad + \Theta \left\{E\big(\langle X^\ast_1,\vph_i \rangle \langle X^\ast_1,\vph_j\rangle
\langle X^\ast_1,\vph_{i'} \rangle \langle X^\ast_1,\vph_{j'}\rangle\big) \right.
\\
&\qquad - \left. E\big(\langle X^\ast_1,\vph_i \rangle \langle X^\ast_1,\vph_j\rangle\big)
E\big(\langle X^\ast_1,\vph_{i'} \rangle \langle X^\ast_1,\vph_{j'}\rangle\big)\right\}.
\end{align*}
\end{lem}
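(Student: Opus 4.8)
The plan is to recognize the statement as a joint central limit theorem for two independent samples, in which the common population mean of the quadratic projections cancels under $H_0$. Write $Y_k(i,j)=\langle X_k,\vph_i\rangle\langle X_k,\vph_j\rangle$ and $Y_k^\ast(i,j)=\langle X_k^\ast,\vph_i\rangle\langle X_k^\ast,\vph_j\rangle$, and set $\mu(i,j)=E\,Y_1(i,j)=\langle\mathfrak C\vph_i,\vph_j\rangle$ and $\mu^\ast(i,j)=E\,Y_1^\ast(i,j)=\langle\mathfrak C^\ast\vph_i,\vph_j\rangle$. The hypothesis $H_0$ (that is, $C=C^\ast$) forces $\mu(i,j)=\mu^\ast(i,j)$ for all $i,j$; this is the only place where $H_0$ is used, and it is what makes the centering terms cancel.

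First I would use that equality to center both empirical averages by the same quantity, so that
\begin{align*}
\Delta_{N,M}(i,j)
&=\left(\frac{M}{N+M}\right)^{1/2}\frac{1}{N^{1/2}}\Sum_{k=1}^N\big(Y_k(i,j)-\mu(i,j)\big)\\
&\quad-\left(\frac{N}{N+M}\right)^{1/2}\frac{1}{M^{1/2}}\Sum_{k=1}^M\big(Y_k^\ast(i,j)-\mu(i,j)\big),
\end{align*}
where the deterministic weights satisfy $M/(N+M)=1-\Theta_{N,M}\to 1-\Theta$ and $N/(N+M)=\Theta_{N,M}\to\Theta$ by \eqref{eq 1.4}.

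Next, I would fix arbitrary constants $(c_{ij})_{1\le i,j\le p}$ and apply the Cram\'er--Wold device. The linear combination $\Sum_{i,j}c_{ij}\Delta_{N,M}(i,j)$ is a weighted difference of two standardized sample means of the scalar i.i.d.\ variables $U_k=\Sum_{i,j}c_{ij}(Y_k(i,j)-\mu(i,j))$ and $U_k^\ast=\Sum_{i,j}c_{ij}(Y_k^\ast(i,j)-\mu(i,j))$. These have mean zero and finite variance: assumption \eqref{eq 1.3} gives $E\|X_1\|^4<\infty$ (since $\|X_1\|^4=(\int_0^1X_1^2)^2\le\int_0^1X_1^4$ pointwise, and then integrate), whence $E\,Y_1(i,j)^2\le E\|X_1\|^4<\infty$ by the Cauchy--Schwarz inequality $\langle X_1,\vph_i\rangle^2\le\|X_1\|^2$, and similarly for the starred sample. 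By the Lindeberg--L\'evy central limit theorem each standardized sum converges to a centered normal law; the two samples are independent and the weights converge, so Slutsky's theorem together with independence yields that $\Sum_{i,j}c_{ij}\Delta_{N,M}(i,j)$ converges in distribution to a centered Gaussian with variance
\[
(1-\Theta)\,\mathrm{Var}(U_1)+\Theta\,\mathrm{Var}(U_1^\ast).
\]
Since this holds for every $(c_{ij})$, the Cram\'er--Wold device delivers the joint Gaussian limit, and expanding $\mathrm{Var}(U_1)$ and $\mathrm{Var}(U_1^\ast)$ into bilinear forms in the $c_{ij}$ identifies $E\Delta(i,j)\Delta(i',j')$ exactly as stated.

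The only genuinely delicate point is the moment verification guaranteeing that the CLT applies, namely the square-integrability of the products $\langle X_1,\vph_i\rangle\langle X_1,\vph_j\rangle$, which is controlled by \eqref{eq 1.3}; everything else is the bookkeeping of combining two independent central limit theorems with weights tending to $(1-\Theta)^{1/2}$ and $\Theta^{1/2}$. I do not anticipate a serious obstacle, because (unlike the earlier lemmas) this statement involves the fixed population eigenfunctions $\vph_i$ rather than the estimated $\what\vph_i$, so the summands are genuinely i.i.d.\ and no eigenfunction-perturbation argument is required.
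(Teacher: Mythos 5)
Your proposal is correct and takes essentially the same route as the paper: the paper's proof likewise notes that under $H_0$ both samples share the same projection means (zero for $i\neq j$, $\lambda_i$ for $i=j$), checks that the products $\langle X_1,\vph_i\rangle\langle X_1,\vph_j\rangle$ have finite second moments via \eqref{eq 1.3}, and then invokes the multivariate central limit theorem for the two independent samples with weights tending to $(1-\Theta)^{1/2}$ and $\Theta^{1/2}$. Your Cram\'er--Wold reduction and the explicit bound $E\big(\langle X_1,\vph_i\rangle\langle X_1,\vph_j\rangle\big)^2\le E\|X_1\|^4<\infty$ merely spell out details the paper leaves implicit.
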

\begin{proof}[\bf Proof.]
First we note that
\[
E\langle X_1,\vph_i\rangle\langle X_1,\vph_j\rangle=
E\langle X^\ast_1,\vph_{i'} \rangle \langle X^\ast_1,\vph_{j'}\rangle=
\left\{
\begin{array}{ccc}
0& \hbox{if}&i\not= j,
\\
\lam_i&\hbox{if}&i=j.
\end{array}
\right.
\]
Since $E\big(\langle X_1,\vph_i\rangle\langle X_1,\vph_j\rangle\big)^2 <\infty$
and $E\big(\langle X^\ast_1,\vph_i\rangle\rangle X^\ast_1,\vph_j\rangle\big)^2 <\infty,$
the multivariate central limit theorem
implies the result.
\end{proof}

Finally, we need an asymptotic approximation to the covariances
$\what L_{N,M} (k,k')$. Let
\begin{align*}
L_{N,M} (k,k')&=
(1-\Theta_{N,M})\left\{ \frac 1N
\Sum^N_{\ell=1}a_\ell (i) a_\ell(j) a_\ell(i') a_\ell(j')
-\big\langle \what{\mathfrak C}_N \what\vph_i,\what\vph_j\big\rangle \big\langle \what{\mathfrak C}_N \what\vph_{i'},\what\vph_{j'}\big\rangle\right\}
\\
&\qquad +\Theta_{N,M} \left\{ \frac 1M \Sum^M_{\ell=1}
a^\ast_\ell (i) a^\ast_\ell(j) a^\ast_\ell(i')a^\ast_\ell(j')
-\big\langle \what{\mathfrak C}^\ast_M \what\vph_i,\what\vph_j\big\rangle \big\langle \what{\mathfrak C}^\ast_M \what\vph_{i'},\what\vph_{j'}\big\rangle \right\},
\end{align*}
where
\[
a_\ell(i)= \langle X_\ell, \vph_i\rangle \quad\hbox{and}\quad
a^\ast_\ell(i)= \langle X^\ast_\ell, \vph_i\rangle,
\]
and $i,j,i',j'$ are determined from $k$ and $k'$ as in
\eqref{index1} and \eqref{index2}.
\begin{lem}\label{lem 2.6}
If the conditions of Theorem \ref{thm 1.1} are satisfied, then for all
$1\le k,k'\le p(p+1)/2,$
\[
\what L_{N,M} (k,k')-\what c_i \what c_j \what c_{i'} \what c_{j'}
L_{N,M} (k,k')= o_P (1)
\quad\text{as}\quad N,M\to\infty,
\]
where $(i,j)$ and $(i',j')$ are determined from $k$ and $k'$ as in
\eqref{index1} and \eqref{index2}.
\end{lem}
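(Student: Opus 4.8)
The plan is to prove the statement by reducing $\what L_{N,M}(k,k')$ to $\what c_i\what c_j\what c_{i'}\what c_{j'}L_{N,M}(k,k')$ summand by summand, with an $o_P(1)$ error at each stage. Since $\Theta_{N,M}\to\Theta\in(0,1)$, the weights $1-\Theta_{N,M}$ and $\Theta_{N,M}$ are bounded, so it suffices to treat the first-sample bracket and the starred bracket separately; the latter is handled by the identical argument applied to the starred quantities. Throughout, write $s=\what c_i\what c_j\what c_{i'}\what c_{j'}\in\{-1,1\}$. The first-sample bracket of $\what L_{N,M}$ differs from that of $L_{N,M}$ only through (a) the fourth-moment average, where $\what a_\ell$ replaces $a_\ell$, and (b) the fact that the subtracted product of second moments is literally the \emph{same} quantity in both definitions.

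For (a) I would show
\[
\frac 1N\Sum_{\ell=1}^N \what a_\ell(i)\what a_\ell(j)\what a_\ell(i')\what a_\ell(j')
= s\,\frac 1N\Sum_{\ell=1}^N a_\ell(i)a_\ell(j)a_\ell(i')a_\ell(j') + o_P(1)
\]
in two steps. First, since $\what a_\ell(i)=\langle X_\ell,\what\vph_i\rangle-\langle\overline X_N,\what\vph_i\rangle$ with $|\langle\overline X_N,\what\vph_i\rangle|\le\|N^{1/2}\overline X_N\|\,N^{-1/2}=O_P(N^{-1/2})$ by \eqref{eq 2.3}, expanding the product shows that dropping the centering changes the average by $O_P(N^{-1/2})$, the remaining averages of triple products being $O_P(1)$ under \eqref{eq 1.3}. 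Second, I would replace $\what\vph_m$ by $\what c_m\vph_m$ via a telescoping identity: the difference of the two products is a sum of four terms, each carrying one factor $\langle X_\ell,\what\vph_m-\what c_m\vph_m\rangle$ and three projections onto unit functions. Using the crude bound $|\langle X_\ell,\psi\rangle|\le\|X_\ell\|$ for $\|\psi\|\le1$, each term is dominated by $\|\what\vph_m-\what c_m\vph_m\|\cdot\frac 1N\Sum_\ell\|X_\ell\|^4$; by Lemma~\ref{lem 2.3} the eigenfunction error is $O_P((N+M)^{-1/2})$, while $\frac 1N\Sum_\ell\|X_\ell\|^4=O_P(1)$ because $E\|X_1\|^4\le\int_0^1 E(X_1(t))^4\,dt<\infty$. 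Using $\|X_\ell\|^4$ rather than a Cauchy--Schwarz split is precisely what keeps the moment requirement at order four. The same two steps yield $\langle\what{\mathfrak C}_N\what\vph_i,\what\vph_j\rangle=\frac 1N\Sum_\ell\what a_\ell(i)\what a_\ell(j)=\what c_i\what c_j\lambda_i\delta_{ij}+o_P(1)$, the last equality by the law of large numbers and the fact that under $H_0$ the $\vph_i$ are eigenfunctions of $\mathfrak C$, so $E(a_1(i)a_1(j))=\lambda_i\delta_{ij}$.

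The delicate point is (b). Because the subtracted product $\langle\what{\mathfrak C}_N\what\vph_i,\what\vph_j\rangle\langle\what{\mathfrak C}_N\what\vph_{i'},\what\vph_{j'}\rangle$ is identical in $\what L_{N,M}$ and $L_{N,M}$, it survives in $\what L_{N,M}-sL_{N,M}$ with coefficient $(1-s)$ and does \emph{not} cancel automatically. I would argue $(1-s)\langle\what{\mathfrak C}_N\what\vph_i,\what\vph_j\rangle\langle\what{\mathfrak C}_N\what\vph_{i'},\what\vph_{j'}\rangle=o_P(1)$ by cases: if $i\ne j$ then $\langle\what{\mathfrak C}_N\what\vph_i,\what\vph_j\rangle=o_P(1)$ by the display above, so the product is $o_P(1)$, and likewise if $i'\ne j'$; in the sole remaining case $i=j$ and $i'=j'$ one has $s=\what c_i^2\what c_{i'}^2=1$, so $(1-s)$ vanishes identically. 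Hence the leftover subtraction term is negligible in every case. Combining this with the fourth-moment reduction, and repeating verbatim for the starred sample (where $\vph_i$ are eigenfunctions of $\mathfrak C^\ast=\mathfrak C$ under $H_0$), gives $\what L_{N,M}(k,k')-sL_{N,M}(k,k')=o_P(1)$, as claimed. I expect the main obstacle to be less the routine bookkeeping of the eigenfunction-replacement step than the conceptual observation that the mismatched second-moment terms are controlled not by cancellation but by the coincidence that the random sign $s$ equals $1$ exactly on the diagonal indices $(i=j,\ i'=j')$ where those second moments fail to vanish.
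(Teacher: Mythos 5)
Your proof is correct and takes essentially the same route as the paper, whose entire argument for Lemma~\ref{lem 2.6} is the remark that it ``follows from Lemma~\ref{lem 2.3} along the lines of the proof of Lemma~\ref{lem 2.4}'' --- precisely the centering-removal and $\what\vph_m\mapsto\what c_m\vph_m$ replacement you execute, powered by the $O_P\left((N+M)^{-1/2}\right)$ eigenfunction rate and the bound $\frac 1N\Sum^N_{\ell=1}\Vert X_\ell\Vert^4=O_P(1)$ from \eqref{eq 1.3}. Your point (b) --- that the subtracted second-moment products enter $\what L_{N,M}(k,k')-\what c_i\what c_j\what c_{i'}\what c_{j'}L_{N,M}(k,k')$ with coefficient $-(1-s)$ and are negligible either because an off-diagonal $\langle\what{\mathfrak C}_N\what\vph_i,\what\vph_j\rangle$ is $o_P(1)$ or because $s=\what c_i^2\,\what c_{i'}^2=1$ when $i=j$ and $i'=j'$ --- is a genuine detail that the paper's one-line proof leaves implicit (its displayed definitions of $L_{N,M}$ and of $L(k,k')$ in \eqref{eq 5.8} contain apparent typos at exactly this spot), and you resolve it correctly.
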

\begin{proof}[\bf Proof.]
The result follows from Lemma \ref{lem 2.3} along the lines of the proof of
Lemma \ref{lem 2.4}
\end{proof}
\begin{proof}[\bf Proof of Theorem \ref{thm 1.1}.]
According to Lemma \ref{lem 2.2} and Lemmas \ref{lem 2.4} -- \ref{lem 2.6},
the asymptotic
distribution of $\what{\bxi}^T_{N.M} \what L^{-1}_{N,M}
\what{\bxi}_{N,M}$
does not depend on the signs $\what c_1,\ldots,\what c_p$,
so it is sufficient
to prove the result for $\what c_1=\ldots =\what c_p=1$.
The law of large numbers
yields that
\begin{align}\label{eq 5.5}
L_{N,M}(k,k')\mathop{\longrightarrow}\limits^P L(k,k'),
\end{align}
where
\begin{align}
L(k,k')&= (1-\Theta) \left\{ E \big(a_1(i) a_1(j) a_1(i') a_1(j')\big)-
E\big( a_1(i) a_1(j) a_1(i') a_1(j')\big)\right\}\label{eq 5.8}
\\
&\qquad + \Theta \left\{E \big(a^\ast_1(i) a^\ast_1(j) a^\ast_1(i') a^\ast_1(j')\big)-
E\big(a^\ast_1(i) a^\ast_1(j) a^\ast_1(i') a^\ast_1(j')\big)\right\}.\notag
\end{align}
Now the result follows from Lemmas \ref{lem 2.2}, \ref{lem 2.4} and
\ref{lem 2.5}
\end{proof}

\begin{proof}[\bf Proof of Theorem~\ref{thm 3.1}.]
We continue to assume  that $\mu=\mu^\ast=0.$
This means that, under $H_0,$ $X_1,\ldots, X_N, X^\ast_1,\ldots, X^\ast_M$
are independent and identically distributed Gaussian processes.
Hence $\langle X_1,\vph_i\rangle =\lam^{1/2}_i N_i,$
$\langle X_1,\vph_j\rangle=\lam^{1/2}_j N^\ast_j,$ where
$N_i, N^\ast_j, 1\le i,j\le p$ are independent standard normal random
variables. We have already pointed out that
\[
E\langle X_1 ,\vph_i\rangle \langle X_1,\vph_j\rangle=
\left\{\begin{array}{ccc}
0& \hbox{if}& i\not= j,
\\
\lam_i& \hbox{if}& i=j,
\end{array}
\right.
\]
and
\[
E\langle X_1 ,\vph_{i'}\rangle \langle X_1,\vph_{j'}\rangle=
\left\{\begin{array}{ccc}
0& \hbox{if}& i'\not= j',
\\
\lam_{i'}& \hbox{if}& i=j'.
\end{array}
\right.
\]
If $i=j=i'=j',$ then
\begin{align*}
&E \big(\langle X_1,\vph_i\rangle\big)^4 -
\big(E (\langle X_1,\vph_i\rangle)^2\big)^2
= \lam^2_i E\big( N^4_i - (EN^2_i)^2\big)
= 2\lam^2_i .
\end{align*}
If $i=i'$ and $j=j'$ $(i\not=j),$ then
\[
E \langle X_1,\vph_i\rangle \langle X_1,\vph_j\rangle
\langle X_1,\vph_{i'}\rangle \langle X_1,\vph_{j'}\rangle = \lam_i \lam_j.
\]
In all other cases
\[
E \langle X_1,\vph_i\rangle \langle X_1,\vph_j\rangle
\langle X_1,\vph_{i'}\rangle \langle X_1,\vph_{j'}\rangle = 0.
\]

Hence $\Delta (i,j),$ $1\le i\le j\le p,$ are independent
normal random variables with mean $0$ and
\[
E\Delta^2 (i,j)=
\left\{\begin{array}{ccc}
\lam_i \lam_j& \hbox{if}& i\not= j,
\\
2\lam^2_i& \hbox{if}& i=j.
\end{array}
\right.
\]
Now the result follows from Lemmas \ref{lem 2.1} -- \ref{lem 2.5}.
\end{proof}

\begin{proof}[\bf Proof of Theorem~\ref{thm 3.3}.]
 First we observe that by the law of large numbers we have
 \[
  \Int_0^1\Int_0^1(\what R_{N,M}(t,s) - R(t,s))^2 dt\,ds = o_P(1).
 \]
Hence using the result in section VI.1. of \citet{gohberg:goldberg:kaashoek:1990} (cf. Lemmas 2.2 and 2.3 in \citet{HKbook}) we get that
\begin{align}\label{eq 5.6}
 \Max_{1\leq i\leq p}\left|\what\lambda_i - \lambda_i\right| = o_P(1)
\end{align}
and
\begin{align}\label{eq 5.7}
 \Max_{1\leq i\leq p}\Vert\what\vph_i - \what c_i\vph_i\Vert = o_P(1),
\end{align}
where $\what c_i = \what c_i(N,M) = \sign(\langle \what \vph_i,\vph_i\rangle).$ Relations \eqref{eq 5.6} and \eqref{eq 5.7} show that Lemma \ref{lem 2.3} remains true. It follows from the law of large numbers and \eqref{eq 5.7} that for all $1\leq i,j\leq p$
\begin{align*}
 &\left| \what\Delta_N (i,j) - \what\Delta_M^\ast (i,j) - \what c_i\what c_j\Int_0^1\Int_0^1\left(C(t,s)-C^\ast(t,s)\right)\vph_i(t)\vph_j(s)dt\,ds\right|\\
 =& \left|\Int_0^1\Int_0^1\left(\what C_N(t,s)-\what C_M^\ast(t,s)\right)\what\vph_i(t)\what\vph_j(s)dt\,ds
 - \what c_i\what c_j\Int_0^1\Int_0^1\left(C(t,s)-C^\ast(t,s)\right)\vph_i(t)\vph_j(s)dt\,ds\right|\\
 \leq&\left|\Int_0^1\Int_0^1\left(\what C_N(t,s)-C(t,s)-\left(\what C_M^\ast(t,s) - C^\ast(t,s)\right)\right)\what\vph_i(t)\what\vph_j(s)dt\,ds\right|\\
&+\left|\Int_0^1\Int_0^1\left(C(t,s)-C^\ast(t,s)\right)\left(\what\vph_i(t)\what\vph_j(s) - \what c_i \vph_i(t)\what c_i \vph_j(s)\right)dt\,ds\right|\\
\leq & \left\|\what C_N - C\right\Vert + \left\Vert\what C_M^* - C^\ast\right\Vert + \left\|C-C^\ast\right\|\left\|\what\vph_i\what\vph_j - \what c_i \vph_i\what c_i \vph_j\right\|\\
=& ~o_P(1),
\end{align*}
where the fact that $\Vert\vph_i\Vert = 1 = \Vert\what\vph_i\Vert$ was used.
Hence the proof of \eqref{eq 3.5} is complete. It is also clear that \eqref{eq 3.5} implies \eqref{eq 3.6}.\\
Next we observe that Lemma \ref{lem 2.6} and  \eqref{eq 5.5} remain true under the alternative. Now by some lengthy calculations it can be verified that $L$ given in \eqref{eq 5.8} is positive definite so that \eqref{eq 3.7} follows from \eqref{eq 3.6}.
\end{proof}

\bibliographystyle{plainnat}

\end{document}